\documentclass[US-letter,USenglish,authorcolumns,autoref]{lipics-v2021}
\usepackage[noend]{algpseudocode}
\usepackage{algorithm}
\usepackage{times}
\usepackage{amssymb}
\usepackage{amsmath}
\usepackage{latexsym}
\usepackage{graphics}
\usepackage{url}
\usepackage{verbatim}
\usepackage{color}
\usepackage{setspace}
\usepackage{multirow}
\usepackage{lineno}
\usepackage{booktabs}
\usepackage{graphicx}
\usepackage{caption}
\usepackage[shortlabels]{enumitem}
\usepackage{diagbox}

\algnewcommand\And{\textbf{and}}
\algnewcommand\Or{\textbf{or}}
\algnewcommand\Not{\textbf{not}}
\algnewcommand\In{\textbf{in}}
\algnewcommand\Each{\textbf{each}}



\newcommand{\squishlist}{
 \begin{list}{$\bullet$}
  { \setlength{\itemsep}{0pt}
     \setlength{\parsep}{3pt}
     \setlength{\topsep}{3pt}
     \setlength{\partopsep}{0pt}
     \setlength{\leftmargin}{2.5em}
     \setlength{\labelwidth}{1em}
     \setlength{\labelsep}{0.5em} } }

\newcommand{\squishlisttwo}{
 \begin{list}{$\triangleright$}
  { \setlength{\itemsep}{0pt}
     \setlength{\parsep}{0pt}
    \setlength{\topsep}{0pt}
    \setlength{\partopsep}{0pt}
    \setlength{\leftmargin}{2em}
    \setlength{\labelwidth}{1.5em}
    \setlength{\labelsep}{0.5em} } }

\newcommand{\squishend}{
  \end{list}  }


\usepackage{listings}

\definecolor{verbgray}{gray}{0.9}

\lstnewenvironment{code}{%
  \lstset{backgroundcolor=\color{verbgray},
 frame=single,
  language=C,
  framerule=0pt,
  basicstyle=\ttfamily,
  showstringspaces=false,
  commentstyle=\color{blue}\textit,
  keywordstyle=\color{black}\bf,
  numbers=none,
  keepspaces=true,
  columns=fullflexible}}{}





\definecolor{shadecolor}{rgb}{.91, .91, .91}
\definecolor{bordercolor}{rgb}{.8, .8, .6}

\definecolor{ultramarine}{rgb}{0, 0.125, 0.376}

 \definecolor{arsenic}{rgb}{0.23, 0.27, 0.29}
 \definecolor{beige}{rgb}{0.96, 0.96, 0.86}

\definecolor{amber}{rgb}{1.0, 0.75, 0.0}
\definecolor{orange}{rgb}{1.0, 0.49, 0.0}
\definecolor{dandelion}{rgb}{0.94, 0.88, 0.19}

  \definecolor{indiagreen}{rgb}{0.07, 0.53, 0.03}
  \definecolor{huntergreen}{rgb}{0.21, 0.37, 0.23}

\newcommand{\blue}[1] {\textcolor{blue}{#1}}
\newcommand{\red}[1] {\textcolor{red}{#1}}
\newcommand{\green}[1] {\textcolor{huntergreen}{#1}}

\newcommand{\bblack}[1] {{\bf \textcolor{black}{#1}}}
\newcommand{\bblue}[1] {{\bf \textcolor{blue}{#1}}}
\newcommand{\rred}[1] {{\bf \textcolor{red}{#1}}}

\newcommand{\defo}[1] {\emph{\textcolor{blue}{#1}}}


\definecolor{shadecolor}{rgb}{.9, .9, .9}

 \usepackage{framed}


 \setlength\FrameRule{1.5pt}

    {\endMakeFramed}

    \newenvironment{frshaded*}{%
    \MakeFramed {\advance\hsize-\width \FrameRestore}}%
    {\endMakeFramed}
    
    \newcounter{examplecounter}
\newenvironment{exam}{
 \begin{frshaded*}
    \refstepcounter{examplecounter}%
    \noindent
  \textbf{Example \arabic{examplecounter}}%
  \quad
}{%
\end{frshaded*}
}


%
{\endMakeFramed}

\newenvironment{frshaded2*}{%
    \MakeFramed {\advance\hsize-\width \FrameRestore}}%
    {\endMakeFramed}

\newenvironment{result}{
 \begin{frshaded2*}
}{%
\end{frshaded2*}

}


%
{\endMakeFramed}

\newenvironment{frshaded3*}{%
    \MakeFramed {\advance\hsize-\width \FrameRestore}}%
    {\endMakeFramed}

\graphicspath{{graphics/}}

\definecolor{winered}{rgb}{0.5,0.2,0}

\usepackage{hyperref}
\hypersetup{ 
    colorlinks = true,
    allcolors={winered},
}

\usepackage{wrapfig}

\usepackage[normalem]{ulem}
\usepackage{changepage}
\usepackage{tabularx}
\usepackage{ragged2e}

\DeclareMathOperator{\firstSymbol}{firstSymbol}
\DeclareMathOperator{\firstNonMin}{firstNonMin}
\DeclareMathOperator{\lastSymbol}{lastSymbol}
\DeclareMathOperator{\lastNonMax}{lastNonMax}
\DeclareMathOperator{\secondLastNonMax}{secondLastNonMax}

\renewcommand{\tt} {\mathtt}

\newcommand{\A} {\mathbf{A}}

\newcommand{\N} {\mathbf{N}}

\newcommand{\Set} {\mathbf{S}}
\newcommand{\OS} {\mathcal{OS}}

\newcommand{\cycletree} {\mathbb{T}}
\newcommand{\treeroot}[1] {\mathit{r}_n}
\newcommand{\ktreeroot}[1] {\mathit{r}_{n,k}}

\DeclareMathOperator{\ap}{ap}

\DeclareMathOperator{\parent}{par}

\nolinenumbers

\title{Constructing $k$-ary Orientable Sequences with Asymptotically Optimal Length}
\titlerunning{~~Orientable sequences}



\author{Daniel Gabri\'{c}}{University of Guelph, Canada}{}{}{}
\author{Joe Sawada}{University of Guelph, Canada}{}{}{}

\author{~}{~}{}{}{}
\authorrunning{~}

\Copyright{~}
\ccsdesc[500]{Mathematics of computing~Discrete mathematics~Combinatorics~Combinatorial algorithms}
\keywords{orientable sequence, de Bruijn sequence, concatenation tree, cycle-joining, universal cycle}



\begin{document}
\maketitle

\begin{abstract}
An orientable sequence of order $n$ over an alphabet $\{0,1,\ldots, k{-}1\}$ is a cyclic  sequence such that each length-$n$ substring appears at most once \emph{in either direction}.
When $k= 2$, efficient algorithms are known to construct binary orientable sequences, with asymptotically optimal length, by applying the classic cycle-joining technique. The key to the construction is the definition of a parent rule to construct a cycle-joining tree of asymmetric bracelets.  Unfortunately, the parent rule does not generalize to larger alphabets. Furthermore, 
unlike the binary case, a cycle-joining tree does not immediately lead to a simple successor-rule when $k \geq 3$ unless the tree has certain properties.
In this paper, we derive a parent rule to derive a cycle-joining tree of $k$-ary asymmetric bracelets.  
This leads to 
a successor rule that constructs asymptotically optimal $k$-ary orientable sequences in $O(n)$ time per symbol using $O(n)$ space.  In the special case when $n=2$, we provide a simple construction of $k$-ary orientable sequences of maximal length.
\end{abstract}

\section{Introduction}  \label{sec:intro}

Given a set $\Set$ of $k$-ary strings of length $n$, a \defo{universal cycle} is a cyclic sequence of length $|\Set|$ that contains each string in $\Set$ as a substring exactly once.   When $\Set$ consists of \emph{all} $k$-ary strings of length $n$, universal cycles are known as $\defo{de Bruijn sequences}$. 
Universal cycles have been studied for many fundamental objects including permutations, subsets, and graphs~\cite{godbole2011,chung}.  Universal cycles do not exist directly for permutations; however, efficient constructions exist using a shorthand representation~\cite{shorthand2,shorthand}.  Universal cycles for $n$-subsets of a $k$-set must satisfy the following necessary condition: $k$ divides ${k \choose n}$, or equivalently $n$ divides ${k-1 \choose n-1}$.  For, if $3$-subsets of a $6$-set, $\Set$ will contain exactly one of $\{123, 132, 213, 231, 312, 321\}$. Universal cycles for subsets, are only known to exist for small values of $n$~\cite{hurlbert94,jackson93,rudoy,stevens02}.  This gives rise to studying subset \emph{packings}, that is,  cyclic sequences that contains each $n$-subset \emph{at most} once as a substring~\cite{curtis,debski2016,stevens02}.  

In this paper we are interested in a set $\Set$ that does not contain both a string and its reversal. Similar to the problem of subset packings, determining a maximal set $\Set$ that admits a universal cycle is extremely challenging. 
A universal cycle for such a $k$-ary set of length-$n$ strings is known as an \defo{orientable sequence} of order $n$ (an $\OS_k(n)$).  
By definition, an orientable sequence does not contain a length-$n$ substring that is a palindrome.  

\begin{wrapfigure}[5]{r}{0.4\textwidth}  
\vspace{-0.2in}
        \centering
      \resizebox{2.5in}{!}{\includegraphics{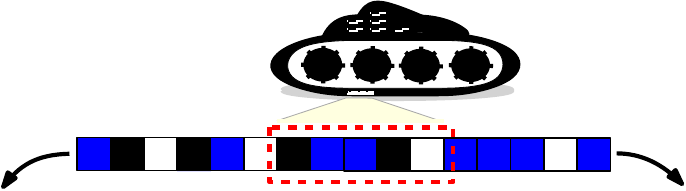}}  
\end{wrapfigure}
\noindent
Orientable sequences were introduced for binary strings by Dai, Martin, Robshaw, and Wild~\cite{Dai} with an application related to robotic position sensing. In particular, consider an autonomous robot with limited sensors.  To determine its location on a cyclic track labeled with coloured squares, the robot scans a window of $n$ squares directly beneath it (see the graphic on the right). For the position \emph{and} orientation to be uniquely determined, the track is designed with the property that each length $n$ window can appear at most once in \emph{either direction}.  


%
%
\begin{exam}  
Consider the sequence $\mathcal{S}= 012013023123$ over the alphabet $\{0,1,2,3\}$.
In the forward direction, including the wraparound, $\mathcal{S}$ contains

\vspace{-0.1in}
\begin{center}
${012}$,
${120}$,
${201}$,
${013}$,
${130}$,
${302}$,
${023}$,
${231}$,
${312}$,
${123}$,
${230}$,
${301}$ 
\end{center}

\vspace{-0.1in}

\noindent
as substrings; 
in the reverse direction $\mathcal{S}$ contains 

\vspace{-0.1in}
\begin{center}
${321}$,
${213}$,
${132}$,
${320}$,
${203}$,
${031}$,
${310}$,
${102}$,
${021}$,
${210}$,
${103}$,
${032}$. 
\end{center}

\vspace{-0.1in}

\noindent
Since each substring is unique, $\mathcal{S}$ is an $\OS_4(3)$ with length 12.  For biological applications, $\{0,1,2,3\}$ represents the four nucleotide bases $\{\tt{A},\tt{C},\tt{G},\tt{T}\}$ of a DNA strand -- see the discussion at the end of the section.
\end{exam} \normalsize

Recently, Mitchell and Wild developed a recursive algorithm  to construct long orientable sequences for a binary alphabet~\cite{MW}. Subsequently,  efficient constructions of $\OS_2(n)$s with asymptotically optimal length have been developed based on cycle-joining~\cite{G&S-Orientable:2024}, and a previously known existence proof~\cite{Dai}.  
However, there is no known construction of long orientable sequences for $k>2$.  In this paper we demonstrate that the binary cycle-joining approach does not naturally generalize to larger alphabets.  In particular, there are special cases for $k=3$ that arise for $n\geq 12$ (see Example~\ref{exam:badone} in Section~\ref{sec:parent}), and there are additional challenges in deriving successor rules from cycle-joining trees for $k\geq 3$ (see Section~\ref{sec:successor}).  However, by deriving a new parent rule that satisfies the Chain Property (see Section~\ref{sec:successor}), we are able to obtain the first efficient construction of long $\OS_k(n)$s.
\begin{result}
\noindent
{\bf Main result}:
 For $k \geq 3$, we develop a successor rule to construct an $\OS_k(n)$ 
  of asymptotically optimal length in $O(n)$ time per symbol using $O(n)$ space.  For $n=2$, we construct $\OS_k(2)$s of maximal length in $O(1)$ time per symbol.
\end{result}

Let $M_k(n)$ denote the maximum length of an $\OS_k(n)$.
When $k=2$, the maximum length of an orientable sequence is known only for $n\leq 7$~\cite{Dai,G&S-Orientable:2024}.  For $n=2$ and $k\geq 3$, we demonstrate that $M_k(2) = k\lfloor (k-1)/2\rfloor$ by a simple construction (see Section~\ref{sec:max2}).  For $n\geq 3$, exhaustive search demonstrates that $M_3(3) = 9$, $M_4(3) = 20$, and $M_3(4) = 30$.  Search also reveals an $\OS_5(3)$ of length 50 which attains the upper bound stated in~\cite{Alhakim&etal:2023}, and thus $M_5(3) = 50$. Orientable sequences that admit these maximal lengths are given below: 
\begin{itemize}
    \item $n=3$, $k=3$: 001120122~~ (9), 
    \item $n=3$, $k=4$: 00112012230130231233~~ (20),
    \item $n=3$, $k=5$: 00112003102210320331140142042132143043144223342344~~  (50), 
    \item $n=4$, $k=3$: 000102001201112022101121022212~~ (30).
\end{itemize}

\noindent
Since the number of palindromes of length $n$ is $k^{\lfloor (n+1)/2\rfloor}$, a trivial upper bound on $M_k(n)$  is $(k^n - k^{\lfloor (n+1)/2\rfloor})/2$. A deeper analysis on upper bounds is given in~\cite{Alhakim&etal:2023}.


Recall the problem of determining a robot's position and orientation on a track. Suppose now that we allow the track to be non-cyclic. 
The corresponding sequence that allows one to determine orientation and position is called an \emph{acyclic orientable sequence}. 
One can construct an acyclic $\OS_k(n)$ from a cyclic $\OS_k(n)$ by taking the cyclic $\OS_k(n)$ and appending its prefix of length $n{-}1$ to the end. See the paper by Burns and Mitchell~\cite{BM} for more on binary acyclic orientable sequences, which they call \emph{aperiodic $2$-orientable window sequences}.  Gabric and Sawada provide some long acyclic orientable sequences in~\cite{OrientJournal}.
%
%
Rampersad and Shallit~\cite{Rampersad&Shallit:2003} showed that for every alphabet size $k\geq 2$ there is an infinite sequence such that for every sufficiently long substring, the reversal of the substring does not appear in the sequence. Fleischer and Shallit~\cite{Fleischer&Shallit:2019} later reproved the results of the previous paper using theorem-proving software. See~\cite{Currie&Lafrance:2016, Mercas:2017} for more work on sequences avoiding reversals of substrings.

Families of strings related to orientable sequences also appear in DNA computing. Two single strands of DNA can bind to each other if they are ``reverse complements'' of each other, where $\tt{A}$ is the complement of $\tt{T}$ and $\tt{C}$ of $\tt{G}$. The binding of DNA strands allows for the creation of secondary structures, which are useful in certain DNA computing techniques~\cite{PRS:1998}. For example, a \emph{stem-loop}, also known as a \emph{hairpin}, is a DNA secondary structure that has applications in DNA computing~\cite{Domaratzki:2006,Kari&etal:2005,Takinoue&Suyama:2006}. Roughly speaking, a string of symbols $u$ contains a hairpin if it has substring $v$ and $\theta(v)$, where $\theta$ is an antimorphic involution. Hairpin-free strings, that is, strings that do not contain a hairpin of sufficient length, have been studied with the motivation of creating a large collection of DNA molecules that do not bind to themselves in undesirable ways~\cite{KKS:2005}. In the case that $\theta$ is the mirror involution (i.e., reversal), hairpin-free strings are essentially orientable sequences without the restriction that every substring of a specified length occurs at most once. Thus, every orientable sequence is a hairpin-free sequence. See~\cite{KKLST:2006,TKYO:2005} for more on applications of long hairpin-free sequences.

\medskip

\noindent {\bf Outline.}  
In Section~\ref{sec:prelim}, we provide background definitions and notation, including a review of the cycle-joining technique, and $k$-ary successor rules.
In Section~\ref{sec:max2}, we present a simple construction for $\OS_k(2)$s and demonstrate they are of maximal length.
In Section~\ref{sec:symmetric}, we define $k$-ary symmetric/asymmetric necklaces and bracelets, and provide some useful properties of these objects.
In Section~\ref{sec:parent}, we provide a parent rule for constructing a cycle-joining tree composed of asymmetric bracelets. This leads to an $O(n)$ time per symbol successor-rule construction of $\OS_k(n)$s that we demonstrate has asymptotically optimal length in Section~\ref{sec:bounds}.
An implementation of our construction is available for download at \url{http://debruijnsequence.org/db/orientable}.

\section{Preliminaries} \label{sec:prelim}

Let $\Sigma = \{0,1,2,\ldots, k{-}1\}$ be an alphabet of size $k \geq 2$. 
Let $\Sigma^n$ denote the set of all length-$n$ strings over $\Sigma$. 
Let 
 $\alpha = \tt{a}_1\tt{a}_2\cdots \tt{a}_n \in \Sigma^n$ and $\beta = \tt{b}_1\tt{b}_2\cdots \tt{b}_m \in \Sigma^m$ for some $m,n\geq 0$, 
 Throughout this paper, we use lexicographic order when comparing two  strings. More specifically, 
  $\alpha < \beta$  if $\alpha$  is a prefix of $\beta$ or if $\tt{a}_i < \tt{b}_i$ for the smallest $i$ such that $\tt{a}_i \neq \tt{b}_i$. 
Let $\alpha^R$ denote the reversal $\tt{a}_n\tt{a}_{n-1}\cdots \tt{a}_1$ of $\alpha$; $\alpha$ is a \defo{palindrome} if $\alpha = \alpha^R$.
For $j\geq 1$, let $\alpha^j$ denote $j$ copies of $\alpha$ concatenated together.
If $\alpha = \gamma^j$ for some non-empty string $\gamma$ and some $j > 1$, then $\alpha$ is said to be \defo{periodic}; 
otherwise, $\alpha$ is said to be \defo{aperiodic} (or \defo{primitive}). Let $\ap(\alpha)$ denote the shortest string $\gamma$ such that $\alpha = \gamma^t$ for some positive integer $t$; we say $\gamma$ is the \defo{aperiodic prefix} of $\alpha$.  

A \defo{necklace class} is an equivalence class of strings under rotation. Let $[\alpha]$ denote the set of strings in $\alpha$'s necklace class.  We say $\alpha$ is a \defo{necklace} if it is the lexicographically smallest string in $[\alpha]$. Let $\tilde{\alpha}$ denote the necklace in $[\alpha]$.  For example, if $\alpha = 0201$, then $[\alpha] = \{0201, 2010, 0102, 1020\}$ and $\tilde \alpha = 0102$.
Let $\N_k(n)$ denote the set of $k$-ary necklaces of length $n$.
A \defo{bracelet class} is an equivalence class of strings under rotation and reversal.
We say $\alpha$ is a \defo{bracelet} if it is the lexicographically smallest string in $[\alpha] \cup [\alpha^R]$. A bracelet is always a necklace, but a necklace need not be a bracelet. 

Given $\mathbf{S} \subseteq \Sigma^n$, a \defo{universal cycle} $U$ for $\mathbf{S}$ is a cyclic sequence of length $|\mathbf{S}|$ that contains each string in $\mathbf{S}$ as a substring (exactly once).  An \defo{orientable sequence} is a universal cycle where if $\alpha \in \Set$, then $\alpha^R \notin \Set$.
If $\Set = \Sigma^n$ then $U$ is known as a \defo{de Bruijn sequence}.    
Given a universal cycle $U$ for a set $\mathbf{S} \subseteq \Sigma^n$, a \defo{successor rule} for $U$ is a function $f:\mathbf{S} \rightarrow \Sigma$ such that $f(\alpha)$ is the symbol following $\alpha$ in $U$.  

\subsection{Cycle-joining trees}  \label{sec:cycle-join}

In this section we review how two universal cycles can be joined  to obtain a larger universal cycle. Cycle joining is perhaps the most fundamental technique applied to construct universal cycles;  it
has graph-theoretic underpinnings related to Hierholzer's algorithm for constructing Euler cycles~\cite{hierholzer}.
For some applications, see~\cite{etzion-cutting,Etzion1987,EtzionPSR,fred-nfsr,karyframework,huang,jansen,multi,weakorder}.

Let $\tt{x},\tt{y}$ be distinct symbols in $\Sigma$.
If $\alpha = \tt{x}\tt{a}_2\cdots \tt{a}_n$ and $\hat \alpha = \tt{y}\tt{a}_2\cdots \tt{a}_n$, then $\alpha$ and $\hat \alpha$ are said to be \defo{conjugates} of each other, and $(\alpha, \hat \alpha)$ is called a \defo{conjugate pair}.
We say $\gamma$ \defo{belongs to} a conjugate pair $(\alpha, \hat \alpha)$ if either $\gamma = \alpha$ or $\gamma = \hat \alpha$. 
The following well-known result (see for instance Lemma 3 in~\cite{dbrange}) based on conjugate pairs is the crux of the cycle-joining approach.

\begin{theorem} \label{thm:concat}
Let $\mathbf{S}_1$ and $\mathbf{S}_2$ be disjoint subsets of $\Sigma^n$
such that $\alpha = \tt{x}\tt{a}_2\cdots \tt{a}_n \in \mathbf{S}_1$ and $\hat \alpha = \tt{y}\tt{a}_2\cdots \tt{a}_n \in \mathbf{S}_2$; $(\alpha, \hat \alpha)$ is a conjugate pair. 
If $U_1$  is a universal cycle for $\mathbf{S}_1$ with suffix $\alpha$ and $U_2$ is a universal cycle for $\mathbf{S}_2$  with suffix $\hat \alpha$ then $U = U_1U_2$ is a universal cycle for $\mathbf{S}_1 \cup \mathbf{S}_2$.
\end{theorem}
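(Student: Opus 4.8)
The plan is to argue by a direct position-by-position accounting of the length-$n$ substrings of the cyclic sequence $U = U_1U_2$, finishing with a counting (pigeonhole) step. First I would record the cardinality bookkeeping: since $\mathbf{S}_1$ and $\mathbf{S}_2$ are disjoint, $|U| = |U_1| + |U_2| = |\mathbf{S}_1| + |\mathbf{S}_2| = |\mathbf{S}_1 \cup \mathbf{S}_2|$, so $U$ has exactly $|\mathbf{S}_1 \cup \mathbf{S}_2|$ cyclic substrings of length $n$, one starting at each position. Consequently it suffices to prove that \emph{every} string of $\mathbf{S}_1 \cup \mathbf{S}_2$ occurs somewhere in $U$: the count then forces each to occur exactly once and rules out any foreign substring, which is precisely the definition of a universal cycle for $\mathbf{S}_1 \cup \mathbf{S}_2$.

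Next I would classify the substrings of $U$ according to whether they lie within one of the two blocks or straddle a junction. Writing $U_1 = u_1\cdots u_p$ with suffix $\alpha$ and $U_2 = v_1\cdots v_q$ with suffix $\hat\alpha$, the substrings lying wholly inside the $U_1$-block (respectively the $U_2$-block) are exactly the non-wrapping substrings of the cyclic sequence $U_1$ (respectively $U_2$), and these account for all of $\mathbf{S}_1$ (respectively $\mathbf{S}_2$) \emph{except} the $n-1$ substrings of each cycle that used that cycle's own wraparound. So the whole content of the proof is to show that these missing wrapping substrings reappear, straddling the two junctions of $U$.

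The crux --- and the step where the hypothesis is used --- is the observation that $\alpha = \tt{x}\tt{a}_2\cdots\tt{a}_n$ and $\hat\alpha = \tt{y}\tt{a}_2\cdots\tt{a}_n$ share the common length-$(n{-}1)$ suffix $\tt{a}_2\cdots\tt{a}_n$. Because $U_1$ ends in this suffix and $U_2$ begins immediately after it in $U$, the $n-1$ substrings straddling the $U_1\!\to\!U_2$ junction are exactly $\tt{a}_{1+j}\cdots\tt{a}_n\,v_1\cdots v_j$ for $j = 1,\ldots,n-1$, which are precisely the $n-1$ wrapping substrings of the cycle $U_2$; symmetrically, since $U_2$ also ends in $\tt{a}_2\cdots\tt{a}_n$ and wraps back to $u_1\cdots$ in $U$, the substrings straddling the global $U_2\!\to\!U_1$ junction are $\tt{a}_{1+j}\cdots\tt{a}_n\,u_1\cdots u_j$, precisely the wrapping substrings of $U_1$. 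Matching these two families against the two sets of missing substrings recovers all of $\mathbf{S}_1$ and $\mathbf{S}_2$, and a tally of the four groups of starting positions returns exactly $|\mathbf{S}_1|+|\mathbf{S}_2|$ substrings, completing the pigeonhole step. I expect the index bookkeeping at the junctions to be the only real obstacle: one must track carefully that the $n-1$ straddling positions at one junction align symbol-for-symbol with the $n-1$ wraparound positions of the opposite cycle, and dispatch the mild degenerate cases $p<n$ or $q<n$ (where ``suffix'' is read cyclically) by the same correspondence.
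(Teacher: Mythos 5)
Your generic-case argument is correct, and it is the standard direct proof of this folklore lemma; note that the paper itself does not prove Theorem~\ref{thm:concat} at all --- it cites it as well known (Lemma~3 of the referenced work), so there is no in-paper proof to match. Your reduction via pigeonhole (every string of $\mathbf{S}_1\cup\mathbf{S}_2$ occurs, and $U$ has exactly $|\mathbf{S}_1|+|\mathbf{S}_2|$ positions) is sound, and the junction analysis --- straddling substrings at the $U_1\!\to\!U_2$ junction coincide with the wrapping substrings of $U_2$, and straddling substrings at the global wraparound coincide with the wrapping substrings of $U_1$, both because $\alpha$ and $\hat\alpha$ share the suffix $\tt{a}_2\cdots\tt{a}_n$ --- is exactly right when $|U_1|,|U_2|\geq n$.

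However, your final sentence, that the degenerate cases $p<n$ or $q<n$ are dispatched ``by the same correspondence,'' is false as stated, and these cases are not peripheral here: in this paper the theorem is applied to cycles of periodic necklaces/bracelets, whose universal cycle is the aperiodic prefix and hence shorter than $n$ (e.g., the cycle $0$ for $[0^n]$, or the cycle $012$ for the periodic asymmetric bracelet $012012$). Concretely, take $n=3$, $U_1=0$ with $\mathbf{S}_1=\{000\}$, $\alpha=000$, $\hat\alpha=100$, $U_2=100$, so $U=0100$. The substring of $U$ starting at position $4$ is $001$: it straddles the global junction, passes through \emph{all} of block~1, and re-enters block~2. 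Your correspondence assigns the positions straddling the global junction to wrapping substrings of $U_1$, but $U_1$ has only one cyclic substring ($000$), while two positions of $U$ straddle that junction; the substring $001$ is in fact a wrapping substring of $U_2$, not of $U_1$. So when a block is shorter than $n$, the four-way classification (interior of block~1 / interior of block~2 / junction~1 / junction~2) collapses, and substrings spanning both junctions must be rerouted to the opposite cycle using the extra observation that $u_1\cdots u_p$ equals the last $p$ symbols of $\tt{a}_2\cdots\tt{a}_n$, hence also the last $p$ symbols of $U_2$. Either add that repair, restrict the theorem to the literal reading of ``suffix'' (which forces $|U_i|\geq n$ and makes your proof complete), or switch to the de Bruijn graph formulation --- the two sets are edge-disjoint closed walks through the common vertex $\tt{a}_2\cdots\tt{a}_n$, and splicing closed walks at a common vertex handles all lengths uniformly.
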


\noindent
Let $U_i$ denote a universal cycle for $\mathbf{S}_i \subseteq \Sigma^n$. Two universal cycles $U_1$ and $U_2$ are said to be \defo{disjoint} if $\mathbf{S}_1 \cap \mathbf{S}_2 = \emptyset$.
Theorem~\ref{thm:concat} states that
two disjoint universal cycles can be joined to form a single universal cycle if they each contain one string of a conjugate pair as a substring. Note that necklaces correspond to disjoint cycles that partition the set $\Sigma^n$.
A \defo{cycle-joining tree} $\cycletree$  is an unordered tree where 
the nodes correspond to a disjoint set of universal cycles $U_1, U_2, \ldots, U_t$; an edge between $U_i$ and $U_j$ is defined by a conjugate pair $(\alpha, \hat \alpha)$ such that $\alpha \in \mathbf{S}_i$ and $\hat \alpha \in \mathbf{S}_j$.  For our purposes, we consider cycle-joining trees to be rooted. 
%
\begin{exam}  \label{exam:UC}
Let $n=3$ and $k=4$.
Consider necklace classes $\Set_1 = [021]$, $\Set_2 = [011]$, and $\Set_3 = [031]$
\begin{wrapfigure}{r}{0.2\textwidth}
\centering  \vspace{-0.1in}
\resizebox{0.7in}{!}{\includegraphics{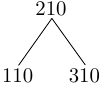}}
\end{wrapfigure}
with corresponding universal cycles $U_1 = 210$, $U_2 = 110$ and $U_3 = 310$.  
The three cycles can be joined via conjugate pairs ($210,110$) and ($210,310$)
to form the cycle-joining tree on the right.
Joining $U_1$ and $U_2$ we obtain the larger cycle $210110$; joining $U_3$ to this cycle we obtain the universal cycle $110210310$ for 
$(\Set_1 \cup \Set_2) \cup \Set_3$.  If we join $U_1$ and $U_3$ first, we obtain a different universal cycle $310210110$.
%
\end{exam} \normalsize
\noindent
Many universal cycle constructions have a corresponding cycle-joining tree that can be defined by a rather simple \defo{parent rule}. For example, when $\Set = \Sigma^n$ and $\alpha = \tt{a}_1\tt{a}_2\cdots \tt{a}_n \in \N_k(n)$, the following are four of the \emph{simplest} parent rules that define how to construct cycle-joining trees with nodes corresponding to $\N_k(n)$~\cite{karyframework}:
\begin{itemize}
    \item $\firstSymbol(\alpha) = $ the necklace of $[\blue{(\tt{a}_1{-}1)} \tt{a}_2\cdots \tt{a}_n]$ with root $(k{-}1)^n$,
    \item $\lastSymbol(\alpha) = $ the necklace of $[\tt{a}_1\tt{a}_2 \cdots \tt{a}_{n-1}\blue{(\tt{a}_n{+}1)}]$ with root $0^n$,  
    \item $\firstNonMin(\alpha) = 0^{i-1}\blue{(\tt{a}_i{-}1)}\tt{a}_{i+1}\cdots \tt{a}_n$ with root $0^n$, and
    \item $\lastNonMax(\alpha) = \tt{a}_1\cdots\tt{a}_{j-1}\blue{(\tt{a}_j{+}1)}(k{-}1)^{n-j}$ with root $(k{-}1)^n$,
\end{itemize}
where $i$ denotes the index of the first non-zero in $\alpha$ and  $j$ denotes the index of the last non-($k{-}1$) in $\alpha$.  Addition on the symbols is modulo $k$.  From the definition of a necklace, it is straightforward to see that if $\alpha$ is a (non-root) necklace, then  both $0^{i-1}\blue{(\tt{a}_i{-}1)}\tt{a}_{i+1}\cdots \tt{a}_n$ and $\tt{a}_1\cdots\tt{a}_{j-1}\blue{(\tt{a}_j{+}1)}(k{-}1)^{n-j}$ are also necklaces.  
Note that if a necklace $\alpha$ is periodic, the corresponding universal cycle for $\alpha$ is its aperiodic prefix.  However, for simplicity of understanding, we use the full necklace to represent a node in our cycle-joining trees.  For instance, when $n=3$, we use $000$ instead of $0$.
As an example, Figure~\ref{fig:pcr33} illustrates the cycle-joining trees induced by the four parent rules above with nodes $\N_3(3)$.  Each node $\alpha$ and its parent $\beta$ are joined by a conjugate pair, where the highlighted bit in $\alpha$ is the first bit in one of the conjugates.  When $k > 3$, we apply the last three of the four parent rules to construct an $\OS_k(n)$ with length that is asymptotically optimal.  When $k=3$, we must introduce one additional function.

\begin{figure}[ht]
\centering
\resizebox{5.7in}{!}{\includegraphics{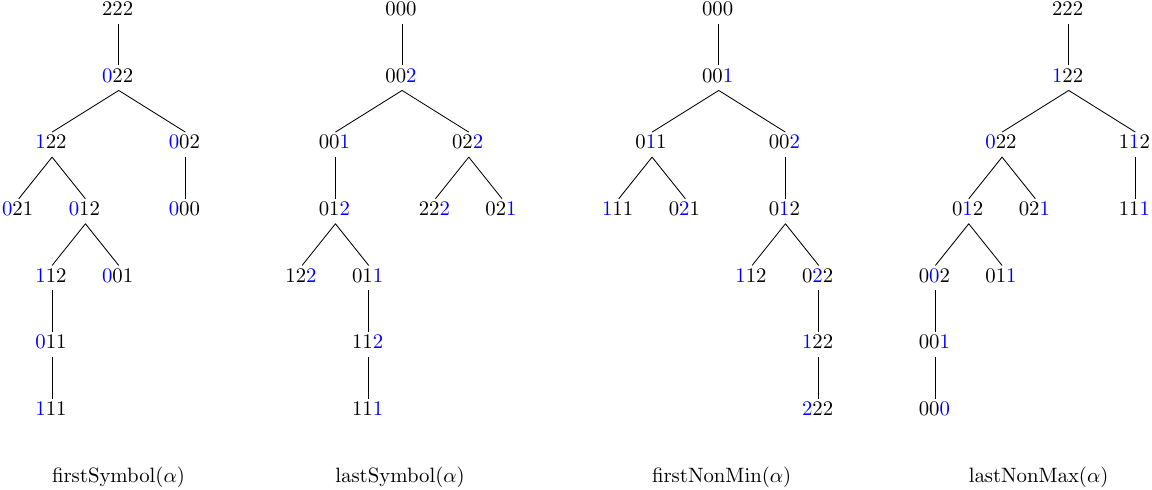}}
\caption{
Cycle-joining trees for $\N_3(3)$ induced by four different parent rules.
}
\label{fig:pcr33}
\end{figure}


\subsection{Successor rules} \label{sec:successor}

In this section we outline how to derive a successor-rule from a cycle-joining tree $\cycletree$ for an underlying set $\Set$.  In the binary case, each cycle-joining tree corresponds to a unique universal cycle; however, when $k>2$, this is not necessarily the case.  

\begin{result}
\noindent
{\bf Uniqueness Property}:  
The cycles in a cycle-joining tree $\cycletree$ are joined such that no two conjugate pairs that have a string in common, i.e., there are no two conjugate pairs of the form ($\tt{x}\tt{a}_2\cdots \tt{a}_n, \blue{\tt{y}}\tt{a}_2\cdots \tt{a}_n$) and  
($\tt{x}\tt{a}_2\cdots \tt{a}_n, \red{\tt{z}}\tt{a}_2\cdots \tt{a}_n$).
\end{result}
The Uniqueness Property is always satisfied when $k=2$, but is not necessarily the case for $k>2$. For example, none of the trees in Figure~\ref{fig:pcr33} satisfy the Uniqueness Property.   If $\cycletree$ has the Uniqueness Property, then a successor rule for the corresponding \emph{unique} universal cycle is given by $f(\alpha)$, where $\alpha = \tt{a}_1\tt{a}_2\cdots \tt{a}_n$:
%
\begin{center}
$f(\alpha) = \left\{ \begin{array}{ll}
         \tt{y} &\ \  \mbox{if $\alpha$ belongs to some conjugate pair $(\alpha, \blue{\tt{y}}\tt{a}_2\cdots \tt{a}_n)$;}\\
         \tt{a}_1 \  &\ \  \mbox{otherwise.}\end{array} \right. $
\end{center}
%

\noindent
When the Uniqueness Property is not satisfied, a universal cycle derived from a cycle-joining tree depends on the order that the cycles are joined together, as illustrated in Example~\ref{exam:UC}.  See~\cite{concat} for a deeper analysis on different universal cycles that can be obtained from the same cycle-joining tree. 
 Our challenge is to create a relatively simple successor rule that defines a universal cycle derived from a cycle-joining tree $\cycletree$ that does not satisfy the Uniqueness Property. 
Ultimately, we require our cycle-joining tree to be defined with the following property:
%
\begin{result}
\noindent
{\bf Chain Property}: If a node in a cycle-joining tree $\cycletree$ has two children joined via conjugate pairs $(\tt{x}\tt{a}_2\cdots \tt{a}_n, \tt{y}\tt{a}_2\cdots \tt{a}_n)$ and 
$(\tt{x}'\tt{b}_2\cdots \tt{b}_n, \tt{y'}\tt{b}_2\cdots \tt{b}_n)$, then $\tt{a}_2\cdots \tt{a}_n \neq \tt{b}_2\cdots \tt{b}_n$. 
\end{result}
Let $\alpha_1, \alpha_2, \ldots, \alpha_m$ denote a maximal-length path of nodes in $\cycletree$ such that for each $1 \leq i < m$, the node $\alpha_i$ is the parent of $\alpha_{i+1}$
and they are joined via a conjugate pair of the form  $(\tt{x}_i\beta, \tt{x}_{i+1}\beta$), where $\beta$ is the same in each conjugate pair.  
We call such a path a \defo{chain} of length $m$.  
Any node with $j$ children will belong to at least $j$ chains.
Given a chain let $\Call{Next}{\tt{x}_i\beta} = \tt{x}_{i+1}$, where $\tt{x}_{m+1} = \tt{x}_1$.
%
\begin{exam} 
The nodes $\alpha_1 = 112$, $\alpha_2 = 011$, and $\alpha_3 = 111$ in the first tree of Figure~\ref{fig:pcr33} form a chain with length $m=3$ joined by conjugate pairs (211,011) and (011,111).  $\Call{Next}{211} = 0$, $\Call{Next}{011} = 1$, and $\Call{Next}{111} = 2$.
\end{exam} \normalsize
\noindent

%
If $\cycletree$ is a cycle-joining tree  with the Chain Property for an underlying set $\Set$, then the following function $g$ is a successor rule for a universal cycle of $\Set$ (based on theory in~\cite{karyframework}):
%
\begin{center}
$g(\alpha) = \left\{ \begin{array}{ll}

 \Call{Next}{\alpha}  &\ \  \mbox{if $\alpha$ belongs to some conjugate pair;}\\
         {\tt{a}_1} \  &\ \  \mbox{otherwise.}\end{array} \right. $
         
\end{center}
When $k=2$, $f=g$.  As stated, this successor rule requires exponential space to store the conjugate pairs.  In our application (see Section~\ref{sec:succ}), the corresponding successor rule will run in $O(n)$ time per symbol and use $O(n)$ space.

%

\section{A maximal-length construction for $\OS_k(2)$s} \label{sec:max2}

In this section we consider the case when $n=2$.  There does not exist an $\OS_1(2)$ or an $\OS_2(2)$, so we assume $k \geq 3$. Since there are no substrings of the form $\tt{x}\tt{x}$ in any $\OS_2(2)$, each symbol in any $\OS_k(2)$ can appear at most $\lfloor (k-1)/2 \rfloor$ times. Thus, $M_k(2) \leq k \lfloor (k-1)/2 \rfloor$. In fact, we show that this bound is tight via a simple construction that depends on the parity of $k$.

For each pair of distinct symbols $\tt{x}$ and $\tt{y}$, an $\OS_k(2)$ can contain either $\tt{xy}$ or $\tt{y}\tt{x}$, but not both. If $k$ is odd, our $\OS_k(2)$ will contain exactly one such substring for each pair of symbols.  If $k$ is even, we must remove $k/2$ such pairs to meet the upper bound. In particular, we remove the pairs $\{0,1\}, \{2,3\}, \ldots , \{k{-}2,k{-}1\}$.  We outline our choices from each pair of symbols,  and illustrate $\OS_k(2)$s in Figure~\ref{fig:paths} for $k=7$ and $k=8$; it is straightforward to generalize the construction depending on the parity of $k$, as follows.
\begin{itemize}
\item {\bf $k$ odd}.  Let $\sigma_3 = 012$. For $k \geq 5$,  let $\sigma_k = \tt{s}_1\tt{s}_2\cdots \tt{s}_{2k-3}$  where
$\tt{s}_1\tt{s}_3\tt{s}_5 \cdots \tt{s}_{2k-5} = 012\cdots (k{-}3)$ and
$\tt{s}_2\tt{s}_4\tt{s}_6 \cdots \tt{s}_{2k-4}\tt{s}_{2k-3} = ((k{-}2)(k{-}1))^{(k-1)/2}$.  \medskip

\item {\bf $k$ even}.  Let $\sigma_4 = 0213$. For $k \geq 6$, let  $\tau_k = \tt{t}_1\tt{t}_2\cdots \tt{t}_{2k-4}$  where
$\tt{t}_1\tt{t}_3\tt{t}_5 \cdots \tt{t}_{2k-5} = 012\cdots (k{-}3)$ and
$\tt{t}_2\tt{t}_4\tt{t}_6 \cdots \tt{t}_{2k-4} = ((k{-}2)(k{-}1))^{k/2-1}$.  
\end{itemize}

\noindent
Let $U_k = \sigma_3\sigma_5 \cdots \sigma_k$ for $k$ odd,  and let 
 $U_k = \tau_4\tau_6 \cdots \tau_k$ for $k$ even.  
 
\begin{figure}[ht]
    \centering
    \resizebox{4.5in}{!}{\includegraphics{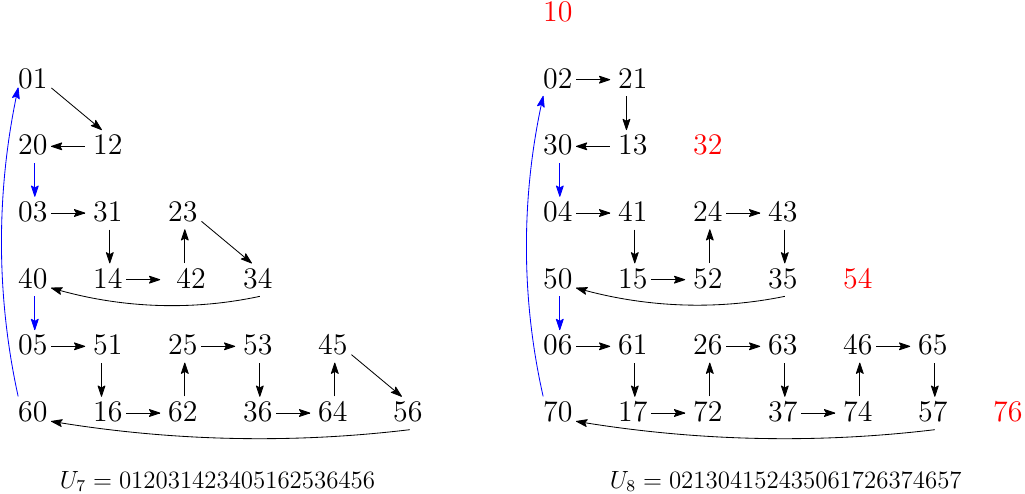}}
    \caption{Illustrating the construction of $U_7$ and $U_8$.}
    \label{fig:paths}
\end{figure}

\begin{exam}
    The following illustrates $U_k$ for $3 \leq k \leq 8$:
    \medskip
    
    \begin{minipage}{3in}
        \begin{itemize}
    \item $U_3 =$  012~~
    \item $U_5 =$  012~0314234~~
    \item $U_7 =$  012~0314234~05162536456~~ 

\end{itemize}
    \end{minipage}~~~~
    \begin{minipage}{3in}
        \begin{itemize}
    \item $U_4 =$  0213~~ 
    \item $U_6 =$  0213~04152435~~ 
    \item $U_8 = $ 0213~04152435~061726374657 
\end{itemize}
    \end{minipage}
\end{exam}

\begin{theorem}
For $k > 2$,  $U_k$ is an $\OS_k(2)$ with length $k \lfloor (k-1)/2 \rfloor$ that can be generated in $O(1)$-time per symbol.
\end{theorem}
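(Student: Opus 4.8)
The plan is to reinterpret $U_k$ as a closed walk in a graph and show it is an Eulerian circuit, from which all three claims follow at once. Identify each symbol with a vertex of the complete graph $K_k$ on $\{0,1,\ldots,k{-}1\}$ and read the cyclic sequence $U_k$ as a list of edges: each length-$2$ window $\tt{x}\tt{y}$ of consecutive symbols (indices taken cyclically) is the edge $\{\tt{x},\tt{y}\}$. A sequence is an $\OS_k(2)$ precisely when no two windows give the same unordered pair and no window is a loop $\tt{x}\tt{x}$; equivalently, the windows traverse a loopless graph using every edge at most once. Hence it suffices to prove that, as a cyclic sequence, $U_k$ traverses each edge of a target graph $G_k$ exactly once, where $G_k=K_k$ for odd $k$ and $G_k=K_k\setminus M_k$ for even $k$, with $M_k=\{\{0,1\},\{2,3\},\ldots,\{k{-}2,k{-}1\}\}$. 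Every vertex of $G_k$ then has even degree and $G_k$ is connected, so such a traversal is an Eulerian circuit and the length of $U_k$ equals $|E(G_k)|$, which is $\binom{k}{2}$ for odd $k$ and $\binom{k}{2}-k/2$ for even $k$, both equal to $k\lfloor(k{-}1)/2\rfloor$.

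I would establish the covering claim by induction on $k$ in steps of two, strengthening the statement to record endpoints: $U_k$ is a walk that starts at vertex $0$, ends at vertex $k{-}1$, and uses each edge of $G_k$ exactly once (its cyclic closing edge being $\{k{-}1,0\}$). The base cases $U_3=012$ and $U_4=0213$ are checked directly. For the step, write $U_k=U_{k-2}\,\sigma_k$ (odd) or $U_k=U_{k-2}\,\tau_k$ (even). By the inductive hypothesis $U_{k-2}$ covers each edge of $G_{k-2}$ once, starts at $0$, ends at $k{-}3$, and closes with the edge $\{k{-}3,0\}$. Passing to $U_k$ replaces that closing edge by two incidences: the junction edge $\{k{-}3,0\}$ (last symbol of $U_{k-2}$ to the first symbol $0$ of the appended block) and the new closing edge $\{k{-}1,0\}$ (last symbol of the block, which is $k{-}1$, to the first symbol $0$ of $U_{k-2}$). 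Thus every edge among the old vertices $\{0,\ldots,k{-}3\}$ is still covered exactly once, and it remains to verify that the internal windows of the appended block, together with the new closing edge, cover exactly the new edges of $G_k$ (those incident to $k{-}2$ or $k{-}1$, excluding $\{k{-}2,k{-}1\}$ in the even case), each exactly once.

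The heart of the argument, and the step I expect to be the main obstacle, is this edge-accounting inside the appended block. Using $\tt{s}_{2i-1}=i{-}1$ (the ``counting'' coordinate) and $\tt{s}_{2i}=b_i$ with $b_i=k{-}2$ for $i$ odd and $b_i=k{-}1$ for $i$ even, the internal windows of $\sigma_k$ are the edges $\{i{-}1,b_i\}$ and $\{i,b_i\}$, plus the terminal window $\{k{-}2,k{-}1\}$ in the odd case; the closing edge contributes $\{0,k{-}1\}$. A short parity computation then shows that the edges with endpoint $k{-}2$ come exactly from the odd values of $i$ and sweep out all of $\{0,1,\ldots,k{-}3\}$, while the edges with endpoint $k{-}1$ come from the even values of $i$ together with the closing edge $\{0,k{-}1\}$ and likewise sweep out $\{0,1,\ldots,k{-}3\}$. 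In the odd case the single terminal window $\{k{-}2,k{-}1\}$ supplies the remaining new edge, whereas in the even case no window ever places $k{-}2$ adjacent to $k{-}1$, so $\{k{-}2,k{-}1\}\in M_k$ is correctly omitted. One also checks that no loop $\tt{x}\tt{x}$ occurs (consecutive symbols always alternate between a counting coordinate and a new symbol $k{-}2$ or $k{-}1$), and, in the even case, that the junction and closing edges $\{0,k{-}3\}$ and $\{0,k{-}1\}$ are not matching edges (which holds since $k\geq 6$). This completes the induction and hence the $\OS_k(2)$ property.

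Finally, the length of $U_k$ equals $|E(G_k)|=k\lfloor(k{-}1)/2\rfloor$ by the covering just proved; as an independent check this also follows from the routine telescoping sums of the block lengths $|\sigma_j|=2j{-}3$ and $|\tau_j|=2j{-}4$. The $O(1)$-time-per-symbol claim is immediate from the explicit coordinate formulas: maintaining the current block together with the position within it, each successive symbol is either the next counting value $i{-}1$ or the next alternating symbol in $k{-}2,k{-}1$, and a constant number of arithmetic operations and comparisons suffices to advance these counters and emit the next symbol.
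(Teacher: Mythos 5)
Your proof is correct, and it reaches the result by a somewhat different route than the paper. The paper's own proof is a short sketch built on its cycle-joining machinery: it observes that the blocks $\sigma_3,\sigma_5,\ldots,\sigma_k$ (resp.\ $\tau_4,\ldots,\tau_k$) are themselves pairwise disjoint cyclic $\OS_k(2)$s, and that concatenating them in order realizes a sequence of conjugate-pair joins (Theorem~\ref{thm:concat}), so the result follows from the general cycle-joining theorem plus a length count. You use the same block decomposition, but instead of invoking cycle joining you recast everything graph-theoretically: a cyclic sequence is an $\OS_k(2)$ exactly when its length-$2$ windows trace distinct non-loop edges, so it suffices to show $U_k$ traverses each edge of $K_k$ (odd $k$) or $K_k$ minus the perfect matching $M_k$ (even $k$) exactly once, which you verify by induction with explicit junction/closing-edge bookkeeping and a parity argument inside each appended block. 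Your junction-and-closing-edge step is in fact the conjugate-pair join in disguise (cycle joining is Hierholzer's method, as the paper notes), so the two arguments are morally equivalent; what yours buys is self-containedness (no appeal to the universal-cycle framework), an explicit verification of the edge accounting that the paper leaves as "simple math," and, as a byproduct, the observation the paper makes separately after the theorem, namely that $U_k$ is a universal cycle (Eulerian circuit) for the $2$-subsets of a $k$-set when $k$ is odd, with the matching removal for even $k$ explained by the parity obstruction to Eulerian circuits. What the paper's route buys is brevity, since the gluing lemma is already established and reused throughout.
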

\begin{proof}
We provide a high-level proof outline.
Clearly $U_3$ is an $\OS_3(2)$ with length $3$, and $U_4$ is an $\OS_4(2)$ with length 4.
Suppose $k \geq 5$ is odd.  The sequences
$\sigma_3, \sigma_5, \cdots, \sigma_k$ are disjoint $\OS_k(2)$s that are cycle-joined to obtain $U_k$. Simple math shows $U_k$ has length $k\lfloor (k-1)/2 \rfloor$. A similar analysis can be applied for even $k$. It is straightforward to generate $U_k$ in constant time per symbol.
\end{proof}

\noindent
An immediate consequence of this theorem is
$M_k(2) = k \lfloor (k-1)/2 \rfloor$.

Observe that when $n$ is odd, $U_k$ is a universal cycle for the $2$-subsets of a $k$-set. Previously, such universal cycles were claimed to exist in~\cite{chung}, with justification in~\cite{jackson93}.  No construction was previously provided.  When $n$ is even, $U_k$ is a maximal $2$-subset packing of a $k$-set.

\section{Symmetric and asymmetric necklaces and bracelets} \label{sec:symmetric}

In this section, we present properties for symmetric/asymmetric necklaces and bracelets that are necessary for our main results in Section~\ref{sec:parent}.
A necklace $\alpha$ is \defo{symmetric} if it belongs to the same necklace class as $\alpha^R$, i.e., both $\alpha$ and $\alpha^R$ belong to $[\alpha]$.  By this definition, a symmetric necklace is a bracelet. If a necklace or bracelet is not symmetric, it is said to be \defo{asymmetric}.
Let $\A_k(n)$ denote the set of all $k$-ary asymmetric bracelets of length $n$.  
Table~\ref{tab:neck} lists all $70$ necklaces in
$\N_4(4)$ partitioned into asymmetric necklace pairs and symmetric necklaces. The asymmetric necklace pairs belong to the same bracelet class, and the first string in each pair is an asymmetric bracelet.  Thus, $|\A_4(4)| = 15$.
In general, $|\A_k(n)|$ is equal to the number of $k$-ary necklaces of length $n$ minus the number of $k$-ary bracelets of length $n$.  For more background
on asymmetric bracelets, see~\cite{brace,G&S-Orientable:2024}.

\begin{table}[ht]
    \centering

\begin{tabular}{l | l }
{\bf Asymmetric necklace pairs} & ~~~~~~~~~ {\bf Symmetric necklaces} \\ \hline

~~\blue{0012}~,~0021  ~~~~~  \blue{0133}~,~0331  &            0000            ~~~ 010\underline{2} ~~~ 0\underline{222} ~~~  111\underline{3} ~~~ 1\underline{323} \\ 
~~\blue{0013}~,~0031  ~~~~~   \blue{0213}~,~0312  &         000\underline{1}  ~~~ 010\underline{3} ~~~  0\underline{232} ~~~  11\underline{22} ~~~ 1\underline{333}    \\ 
~~\blue{0023}~,~0032   ~~~~~ \blue{0223}~,~0322  &          000\underline{2}  ~~~ 0\underline{111}  ~~~ 030\underline{3} ~~~  11\underline{33} ~~~ 2222 \\  
~~\blue{0112}~,~0211  ~~~~~   \blue{0233}~,~0332  &         000\underline{3}  ~~~  0\underline{121} ~~~ 0\underline{313} ~~~  121\underline{2} ~~~ 222\underline{3} \\  
~~\blue{0113}~,~0311  ~~~~~  \blue{1123}~,~1132  &           00\underline{11} ~~~ 0\underline{131}  ~~~ 0\underline{323} ~~~  121\underline{3} ~~~ 22\underline{33} \\ 
~~\blue{0122}~,~0221  ~~~~~  \blue{1223}~,~1322  &          00\underline{22}  ~~~ 020\underline{2} ~~~  0\underline{333} ~~~ 1\underline{222} ~~~ 232\underline{3} \\ 
~~\blue{0123}~,~0321  ~~~~~  \blue{1233}~,~1332  &           00\underline{33} ~~~ 020\underline{3} ~~~ 1111               ~~~ 1\underline{232}   ~~~ 2\underline{333}\\  
~~\blue{0132}~,~0231                        &         010\underline{1}    ~~~ 0\underline{212} ~~~ 111\underline{2}  ~~~ 131\underline{3}  ~~~   3333\\ 

\end{tabular}

    \caption{The $70$ necklaces in $\N_4(4)$ partitioned into 15 asymmetric necklace pairs and 40 symmetric necklaces. The first (highlighted) necklace in each pair is an asymmetric bracelet in $\A_4(4)$.}
    \label{tab:neck}
\end{table}

Let $\alpha_1, \alpha_2, \ldots, \alpha_m$ denote the asymmetric bracelets in $\A_k(n)$.  Let $\Set_k(n)$ denote the set $[\alpha_1] \cup [\alpha_2] \cup \cdots \cup [\alpha_m]$.

\begin{result}
\noindent
    {\bf Important property:}  If $\alpha \in \Set_k(n)$, then $\alpha^R \notin \Set_k(n)$.
\end{result}

\noindent
Let $L_k(n) = |\Set_k(n)|$.
Our main result defines a cycle-joining tree with nodes $\A_k(n)$, producing an $\OS_k(n)$ of length $L_k(n)$.  In Section~\ref{sec:bounds}, we provide a formula for $L_k(n)$ and demonstrate it to be an asymptotically optimal lower bound for $M_k(n)$ (the maximal length of an $\OS_k(n)$). The reminder of this section is devoted to properties of symmetric/asymmetric necklaces and bracelets required to prove our main results in Section~\ref{sec:parent}.

%


%
\begin{lemma}[\cite{G&S-Orientable:2024}] \label{lem:pal}
A necklace $\alpha$ is symmetric if and only if there exists palindromes $\beta_1$ and $\beta_2$ such that $\alpha = \beta_1 \beta_2$.
\end{lemma}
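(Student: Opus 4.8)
The plan is to prove both directions directly from the definition of symmetry: a necklace $\alpha$ is symmetric exactly when $\alpha^R \in [\alpha]$, i.e. when $\alpha^R$ is a cyclic rotation of $\alpha$. Everything rests on one elementary identity relating rotation and reversal of a factored word. For the easy (``if'') direction, suppose $\alpha = \beta_1\beta_2$ with $\beta_1,\beta_2$ palindromes. Then $\alpha^R = (\beta_1\beta_2)^R = \beta_2^R\beta_1^R = \beta_2\beta_1$, and $\beta_2\beta_1$ is precisely the rotation of $\alpha=\beta_1\beta_2$ by $|\beta_1|$ positions. Hence $\alpha^R \in [\alpha]$ and $\alpha$ is symmetric. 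I would present this as a one-liner, as it uses nothing beyond the definitions.

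For the ``only if'' direction, suppose $\alpha$ is symmetric, so there is an integer $d$ with $0 \le d < n$ such that rotating $\alpha$ to the left by $d$ positions yields $\alpha^R$. I would write $\alpha = uv$, where $u$ is the length-$d$ prefix of $\alpha$. Rotating $\alpha$ left by $d$ positions is exactly the word $vu$, so the symmetry hypothesis reads $vu = \alpha^R$. On the other hand, reversing the factored form gives $\alpha^R = (uv)^R = v^R u^R$, and therefore $vu = v^R u^R$. Since $|v| = |v^R| = n-d$ and $|u| = |u^R| = d$, the two factorizations are length-aligned: matching the first $n-d$ symbols forces $v = v^R$, and matching the last $d$ symbols forces $u = u^R$. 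Thus $u$ and $v$ are palindromes and $\alpha = uv$ is the desired factorization, with $\beta_1 = u$ and $\beta_2 = v$.

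The one place that needs care, and the step I would flag as the main (though minor) obstacle, is the degenerate case $d=0$: here $\alpha^R = \alpha$, so $\alpha$ is itself a palindrome and the construction above yields the empty word $u=\varepsilon$. I would resolve this either by adopting the standard convention that $\varepsilon$ is a (trivial) palindrome, or by observing that a palindrome that is also a necklace must be a power of a single symbol: the right-rotation identity together with the necklace (lex-least) condition forces $\tt{a}_2 = \tt{a}_1$, and then an easy induction forces $\tt{a}_i = \tt{a}_1$ for all $i$. In that situation one chooses $d=1$ instead, splitting $\alpha = \tt{a}_1 \cdot \tt{a}_1^{\,n-1}$ into two nonempty palindromes, so for $n \ge 2$ one always obtains two genuine palindrome factors. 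Apart from this bookkeeping there is no real difficulty; the crux is simply recognizing that a single rotation carrying $\alpha$ to $\alpha^R$ can be re-read, via $vu = v^R u^R$, as a two-palindrome factorization.
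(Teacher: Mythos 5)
Your proof is correct, but there is nothing in this paper to compare it against: Lemma~\ref{lem:pal} is stated here with a citation and imported from the binary-case paper \cite{G&S-Orientable:2024}, so the present paper contains no proof of it. Your argument is the classical one for the underlying combinatorics-on-words fact that a word is a product of two palindromes if and only if it is conjugate to its reversal: writing the witnessing rotation as $\alpha = uv \mapsto vu$ and equating $vu = \alpha^R = v^R u^R$, the length-aligned factorizations force $v = v^R$ and $u = u^R$; conversely $\beta_1\beta_2$ rotates to $\beta_2\beta_1 = \beta_2^R\beta_1^R = (\beta_1\beta_2)^R$. Your treatment of the degenerate case $d=0$ is also sound, and your side claim there is true: if a necklace $\alpha = a_1\cdots a_n$ is a palindrome then $a_n = a_1$, and comparing $\alpha$ with its rotation $a_n a_1 \cdots a_{n-1}$ under the lex-least condition forces $a_2 \leq a_1$, hence $a_2 = a_1$ (as $a_1$ is the minimal symbol of a necklace), and induction gives $\alpha = a_1^n$, which splits as $a_1 \cdot a_1^{n-1}$. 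One remark: the necklace hypothesis is never actually needed in your argument except to arrange two \emph{nonempty} palindromic factors; once the empty word is admitted as a palindrome, your proof establishes the equivalence for arbitrary words, which is slightly more general than the statement requires and is exactly the form in which the fact is usually quoted.
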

%
%

\begin{lemma} \label{lem:firstnon}
    Let $\alpha = \tt{a}_1\tt{a}_2\cdots \tt{a}_n \neq 0^n$ be a bracelet, where $i$ is the index of the first non-zero symbol.  If $\tt{a}_i > 1$, then $\firstNonMin(\alpha)$ is a bracelet; moreover, if $\alpha \in \A_k(n)$, then $\firstNonMin(\alpha) \in \A_k(n)$.  
\end{lemma}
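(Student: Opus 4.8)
The plan is to prove the bracelet claim by reducing it to a single family of inequalities, and then to extract the asymmetry claim from the very same comparison. Write $\beta = \firstNonMin(\alpha) = 0^{i-1}(\tt{a}_i{-}1)\tt{a}_{i+1}\cdots \tt{a}_n$. By the necklace fact recorded just before Figure~\ref{fig:pcr33}, $\beta$ is already a necklace (since $\alpha$ is a necklace and $\alpha \neq 0^n$), so $\beta$ is $\le$ every rotation of itself; hence to conclude that $\beta$ is a bracelet it suffices to show $\beta \le \rho$ for every rotation $\rho$ of $\beta^R$. The key structural observation is that the hypothesis $\tt{a}_i > 1$ forces $\tt{a}_i{-}1 \ge 1$, so passing from $\alpha$ to $\beta$ replaces one non-zero symbol by another non-zero symbol and leaves the positions of all $0$'s unchanged. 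Thus $\beta$ and $\alpha$ share the same cyclic zero-runs; in particular the longest zero-run of $\beta^R$ has length $i-1$, matching the leading block $0^{i-1}$ of the necklace $\beta$. Since $\beta$ begins with this maximal block, any rotation of $\beta^R$ beginning with fewer than $i-1$ zeros is automatically strictly larger than $\beta$, and none begins with more, so only the rotations of $\beta^R$ beginning with exactly $i-1$ zeros need to be examined.

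First I would fix such a rotation $\rho$ of $\beta^R$ and let $\rho'$ be the rotation of $\alpha^R$ by the same amount. Because $\beta^R$ and $\alpha^R$ differ only in the single coordinate carrying the modified symbol, $\rho$ and $\rho'$ differ only in one coordinate $q$, where $\rho$ holds $\tt{a}_i{-}1$ and $\rho'$ holds $\tt{a}_i$; as that coordinate is non-zero in both, it lies beyond the leading block, i.e.\ $q \ge i$, and consequently $\rho'$ also begins with $0^{i-1}$. Since $\alpha$ is a bracelet, $\alpha \le \rho'$. Comparing $\beta$ with $\rho$ then splits on whether $q=i$ or $q>i$: when $q>i$ the two agree on the leading zeros and on position $i$, and $\alpha \le \rho'$ forces $\tt{a}_i \le \rho'_i = \rho_i$, whence $\beta_i = \tt{a}_i{-}1 < \rho_i$ gives $\beta < \rho$ strictly; when $q=i$, after cancelling the common symbol $\tt{a}_i{-}1$ in position $i$, the inequality $\beta \le \rho$ reduces exactly to the tail inequality already guaranteed by $\alpha \le \rho'$. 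In every case $\beta \le \rho$, so $\beta$ is a bracelet.

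For the asymmetry statement I would push the same comparison to its conclusion and argue the contrapositive: if $\beta$ is symmetric then $\alpha$ is symmetric. Indeed, $\beta$ symmetric means $\beta$ equals some rotation $\rho$ of $\beta^R$; such a $\rho$ begins with $0^{i-1}$, so it is one of the rotations just analysed, and the equality $\beta = \rho$ is incompatible with the strict case $q>i$. Hence $q=i$, and then $\beta = \rho$ forces $\alpha = \rho'$ (each of $\alpha,\rho'$ differs from $\beta,\rho$ only by the $+1$ in the single coordinate $i$), exhibiting $\alpha$ as a rotation of $\alpha^R$, i.e.\ $\alpha$ is symmetric. Taking the contrapositive, if $\alpha \in \A_k(n)$ is asymmetric then $\beta$ is asymmetric, and since $\beta$ is a bracelet we conclude $\beta \in \A_k(n)$.

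I expect the main obstacle to be the careful bookkeeping of the correspondence between rotations of $\beta^R$ and $\alpha^R$: one must verify that the single modified coordinate always falls outside the leading zero-block (this is precisely where $\tt{a}_i>1$ enters) and that the $q=i$ versus $q>i$ dichotomy cleanly separates the equality case from the strict case, since it is this single dichotomy that simultaneously delivers the bracelet property and the preservation of asymmetry.
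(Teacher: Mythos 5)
Your proof is correct, and it takes a genuinely different route from the paper's. The paper dismisses the bracelet claim as a straightforward application of the definition, and proves preservation of asymmetry by invoking the palindrome-decomposition characterization of symmetric necklaces (Lemma~\ref{lem:pal}): assuming $\beta = \firstNonMin(\alpha)$ is symmetric, it writes $\beta = \beta_1\beta_2$ with $\beta_1,\beta_2$ palindromes and runs a case analysis on $|\beta_1|$ (ruling out $|\beta_1| = i-1$ via $\tt{a}_i \le \tt{a}_n$, then showing $|\beta_1| = i$, $|\beta_1| = 2i-1$, and $|\beta_1| > 2i-1$ each contradict that $\alpha$ is an asymmetric bracelet). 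You instead work directly with the definition of a bracelet: since $\tt{a}_i > 1$, decrementing $\tt{a}_i$ preserves the positions of all zeros, so only rotations of $\beta^R$ beginning with exactly $i-1$ zeros need to be compared against $\beta$, and each such rotation $\rho$ is matched with the corresponding rotation $\rho'$ of $\alpha^R$, with the dichotomy $q>i$ (strict inequality $\beta < \rho$) versus $q=i$ (the inequality $\beta \le \rho$ inherited from $\alpha \le \rho'$, with equality if and only if $\alpha = \rho'$). This single comparison yields both conclusions at once: $\beta \le \rho$ in all cases gives the bracelet property, and the equality analysis shows that $\beta$ symmetric forces $\alpha$ symmetric, which is the contrapositive of asymmetry preservation. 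What your approach buys: it is self-contained (no appeal to Lemma~\ref{lem:pal}), it actually supplies the proof of the bracelet half that the paper leaves to the reader, and the two halves of the lemma fall out of one argument. What the paper's approach buys: brevity, since Lemma~\ref{lem:pal} is already established and is reused in Lemmas~\ref{lem:prop2} and~\ref{lem:well-defined}, so the symmetric case there reduces to a short case analysis on factor lengths.
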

\begin{proof}
    Suppose $\tt{a}_i > 1$.
        It is straightforward to apply the definition of a bracelet to verify that $\beta = \firstNonMin(\alpha) = 0^{i-1}(\tt{a}_i{-}1)\tt{a}_{i+1}\cdots \tt{a}_n$ is a bracelet, given $\alpha$ is a bracelet.  
        Let $\alpha \in \A_k(n)$.  Suppose $\beta$ is symmetric. From Lemma~\ref{lem:pal},  $\beta = \beta_1\beta_2$ for palindromes $\beta_1$ and $\beta_2$.  Clearly $|\beta_1| \geq i-1$.  Since $\alpha$ is a bracelet $\tt{a}_i \leq \tt{a}_n$, and thus
        $|\beta_1| \neq i-1$.  If $|\beta_1| = i$, then $i=1$ which implies that $\alpha$ is symmetric; otherwise, since $\beta_1$ has prefix $0^{i-1}$, it must have length at least $2i-1$.  
        If $|\beta_1| = 2i-1$, then $\alpha$ is symmetric; if  $|\beta_1| > 2i-1$ then $\alpha$ is not a bracelet. Each case contradicts that $\alpha \in \A_k(n)$. 
        Thus, $\beta$ is an asymmetric bracelet.
\end{proof}

\noindent
The following lemma follows by applying the definition of a bracelet.

\begin{lemma} \label{lem:lastnon}
If $\alpha \neq (k{-}1)^n$ is a bracelet,  then $\lastNonMax(\alpha)$ is a bracelet. 
\end{lemma}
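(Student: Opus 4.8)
The plan is to verify directly, from the definition of a bracelet, that $\beta := \lastNonMax(\alpha)$ is lexicographically least in $[\beta]\cup[\beta^R]$. Write $\alpha = \tt{a}_1\cdots\tt{a}_{j-1}\tt{a}_j(k{-}1)^{n-j}$, where $\tt{a}_j \le k{-}2$ and $\tt{a}_{j+1}=\cdots=\tt{a}_n=k{-}1$, so that $\beta = \tt{a}_1\cdots\tt{a}_{j-1}(\tt{a}_j{+}1)(k{-}1)^{n-j}$ differs from $\alpha$ in the single position $j$. By the observation following the four parent rules (incrementing the last non-$(k{-}1)$ symbol and filling with $(k{-}1)$'s preserves the necklace property), $\beta$ is already a necklace, hence least in $[\beta]$. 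Since $\tt{a}_1$ is the minimum symbol of $\alpha$ and the increment only enlarges a symbol, $\tt{a}_1$ is also the minimum symbol of $\beta$; thus it suffices to show $\beta \le R$ for every rotation $R$ of $\beta^R$ that begins at a position holding $\tt{a}_1$, since any rotation beginning with a larger symbol already exceeds $\beta$.

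The key structural fact I would exploit is that $\beta^R$ is obtained from $\alpha^R$ by replacing the single symbol $\tt{a}_j$ with $\tt{a}_j{+}1$. Consequently each admissible rotation $R$ of $\beta^R$ is the corresponding rotation $R'$ of $\alpha^R$ with exactly one symbol incremented, say at position $P$, while $\beta$ is $\alpha$ with its symbol incremented at position $j$. Because $\alpha$ is a bracelet we have $\alpha \le R'$; let $d$ be the first index at which $\alpha$ and $R'$ disagree (with $d=\infty$ if $\alpha=R'$). The plan is then to locate the first index at which $\beta$ and $R$ disagree by tracking the three distinguished indices $d$, $j$, and $P$: wherever $d$, or the increment position $P$, comes at or before $j$, the first disagreement is settled in favour of $\beta$ by a direct first-difference comparison, mirroring the argument in the proof of Lemma~\ref{lem:firstnon}.

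The only delicate situation, and the main obstacle, is when $P > j$, i.e.\ the incremented symbol of $R$ lies strictly after the incremented symbol of $\beta$; here the increment $\tt{a}_j \mapsto \tt{a}_j{+}1$ may create a fresh $k{-}1$, and one must rule out that some later position makes $\beta$ exceed $R$. I would resolve this using the forced structure of the agreement prefix: if $\alpha$ and $R'$ agree up to position $j$ while $P>j$, then $j$ falls inside the block of $(k{-}1)$'s of $R'$, which would force $\tt{a}_j=k{-}1$ and contradict $\tt{a}_j\le k{-}2$; and in the remaining alignments the agreement constraint forces the symbols lying between the two increments to equal $k{-}1$, so that $\beta$ and $R$ in fact coincide. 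Thus $\beta\le R$ in every case, $\beta$ is a bracelet, and the whole argument is the kind of routine (if slightly fiddly) case analysis on the definition of a bracelet that the statement advertises.
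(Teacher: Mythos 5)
Your proposal is correct, and it is essentially the argument the paper leaves to the reader: the paper gives no proof of this lemma, asserting only that it ``follows by applying the definition of a bracelet,'' and your case analysis---including the genuinely delicate alignments with $P>j$, which you correctly resolve by noting that position $j$ of $R'$ would lie inside its block of $(k{-}1)$'s, forcing either the contradiction $\mathtt{a}_j=k{-}1$ or the equality $\beta=R$---is exactly that routine verification. One small mis-classification, easily patched: the alignment $P=j$ with $\alpha$ and $R'$ agreeing through position $j$ is not settled by a first-difference win as your second paragraph suggests, but by the same forced-structure idea you already use ($\alpha\le R'$ together with the maximal suffix $(k{-}1)^{n-j}$ of $\alpha$ forces $\alpha=R'$, hence $\beta=R$), so this is a slip of bookkeeping rather than a gap.
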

%

\noindent
Note that $\lastSymbol(\alpha) = \lastNonMax(\alpha)$ when $\tt{a}_n < k{-}1$.

\begin{lemma} \label{lem:prop2}
Let $\alpha = \tt{a}_1\tt{a}_2\cdots \tt{a}_n \in \A_k(n)$, where $j$ is the index of the last non-($k{-}1$) and  $\ell$ is the index of the second last non-($k{-}1$).
    If (i) $\tt{a}_n < k{-}1$ or (ii) $\tt{a}_j < k{-}2$ or 
    (iii) $\tt{a}_1\cdots \tt{a}_{\ell} \neq (\tt{a}_1\cdots \tt{a}_{\ell})^R$, then $\lastNonMax(\alpha) \in \A_k(n)$;  otherwise, $\lastNonMax(\alpha)$ is a symmetric bracelet.
\end{lemma}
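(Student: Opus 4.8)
The plan is to reduce the statement to a single biconditional and then handle its two directions. First I would record that since $\alpha$ is an \emph{asymmetric} bracelet it cannot equal the palindrome $(k{-}1)^n$, so Lemma~\ref{lem:lastnon} applies and $\beta := \lastNonMax(\alpha) = \tt{a}_1\cdots\tt{a}_{j-1}(\tt{a}_j{+}1)(k{-}1)^{n-j}$ is a bracelet, hence a necklace. Consequently ``$\beta$ is a symmetric bracelet'' is the same as ``$\beta$ is symmetric'', and ``$\beta \in \A_k(n)$'' is the same as ``$\beta$ is asymmetric''. Writing $c = \tt{a}_j{+}1 \le k{-}1$, the whole lemma is therefore equivalent to showing that $\beta$ is symmetric if and only if all three conditions fail, i.e. $\tt{a}_n = k{-}1$, $\tt{a}_j = k{-}2$ (equivalently $c = k{-}1$), and $\tt{a}_1\cdots\tt{a}_\ell$ is a palindrome. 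Two facts I would keep at hand: $\tt{a}_1 = \min(\beta)$, and, since $\alpha$ is a necklace, $\tt{a}_1 \le \tt{a}_j = c{-}1 < c$; also, asymmetry forces $\alpha$ to have at least two non-$(k{-}1)$ symbols (one non-max symbol yields the symmetric necklace $\tt{a}_1(k{-}1)^{n-1}$), so $\ell$ genuinely exists.

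For the easy direction I would assume the three equalities. Then $c = k{-}1$ absorbs position $j$ into the trailing run and the maximal run of $(k{-}1)$'s reaches back to position $\ell{+}1$, so $\beta = \tt{a}_1\cdots\tt{a}_\ell (k{-}1)^{n-\ell}$; since both $\tt{a}_1\cdots\tt{a}_\ell$ and $(k{-}1)^{n-\ell}$ are palindromes, Lemma~\ref{lem:pal} gives that $\beta$ is symmetric. I would also isolate here a short ``bracelet-forces'' step that supplies the missing condition~(i): if $\alpha$ is a bracelet with $\tt{a}_j = k{-}2$, $\tt{a}_n < k{-}1$ (so $j = n$), and $\tt{a}_1\cdots\tt{a}_\ell$ were a palindrome, then $\alpha = \tt{a}_1\cdots\tt{a}_\ell (k{-}1)^{n-1-\ell}(k{-}2)$ and the rotation $\gamma = \tt{a}_1\cdots\tt{a}_\ell (k{-}2)(k{-}1)^{n-1-\ell} \in [\alpha^R]$ (using $(\tt{a}_1\cdots\tt{a}_\ell)^R = \tt{a}_1\cdots\tt{a}_\ell$) satisfies $\gamma < \alpha$, contradicting that $\alpha$ is lexicographically least in $[\alpha]\cup[\alpha^R]$. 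Hence $\tt{a}_j = k{-}2$ together with a palindromic prefix already force $\tt{a}_n = k{-}1$, so the three ``failure'' conditions reduce to just $c = k{-}1$ plus a palindromic prefix.

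For the forward direction I would assume $\beta$ symmetric and use Lemma~\ref{lem:pal} to write $\beta = \pi_1\pi_2$ with $\pi_1,\pi_2$ palindromes, $p = |\pi_1|$. The last symbol of the palindromic prefix $\pi_1$ equals $\tt{a}_1 = \min(\beta)$; since $\beta$ has value $c > \tt{a}_1$ at position $j$ and value $k{-}1 > \tt{a}_1$ afterwards, this rules out $p \ge j$ and gives $p < j$. The goal is then to force $c = k{-}1$ and $\tt{a}_1\cdots\tt{a}_\ell$ palindromic. The danger is that a palindromic $\pi_2$ could begin with an \emph{internal} run of $(k{-}1)$'s mirroring the trailing run, giving a decomposition whose prefix is \emph{not} a palindrome and in which $c$ need not be $k{-}1$; indeed, as necklaces alone such symmetric $\beta$ do exist. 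The decisive point is that $\alpha$ is a \emph{bracelet}, not merely a necklace: I would show that any such internal-run decomposition lets one decrement the symbol at the mirror image of position $j$ and read off a rotation of $\alpha^R$ strictly smaller than $\alpha$ (the same $\gamma$-style comparison as above), contradicting bracelet minimality. Excluding these decompositions leaves only the clean split $\pi_1 = \tt{a}_1\cdots\tt{a}_\ell$ with $c = k{-}1$, so conditions~(ii) and~(iii) fail, and then the bracelet-forces step makes~(i) fail as well.

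The step I expect to be the main obstacle is precisely this last one: excluding the internal-run palindrome decompositions. It is tempting to argue purely from the shape of $\beta$, but that is false --- the bracelet hypothesis on $\alpha$ is essential --- so the argument must translate the symmetry of $\beta$ back into a forbidden conjugate of $\alpha^R$. Getting this comparison right, together with the boundary bookkeeping (the empty trailing run when $j = n$, the adjacent case $\ell = j{-}1$, and confirming that the maximal $(k{-}1)$-run of $\beta$ really begins at position $\ell{+}1$), is where the care is needed; the remaining pieces follow routinely from Lemmas~\ref{lem:pal} and~\ref{lem:lastnon}.
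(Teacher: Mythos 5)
Your proposal is sound and, at bottom, it is the paper's own proof reorganized: both reduce the statement to whether the bracelet $\beta=\lastNonMax(\alpha)$ is symmetric (Lemma~\ref{lem:lastnon}), factor a symmetric $\beta$ into palindromes via Lemma~\ref{lem:pal}, and eliminate the unwanted factorizations by decrementing at a mirrored position to produce a member of $[\alpha^R]$ that is at most $\alpha$, contradicting $\alpha\in\A_k(n)$ --- the paper simply runs this as three separate condition-cases (its $(\gamma\beta_1)^R\le\alpha$ comparisons) instead of your unified internal-run analysis plus the redundancy observation that $\neg$(ii)$\,\wedge\,\neg$(iii) forces $\neg$(i). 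One boundary case to add to your bookkeeping list: when the palindrome $\pi_2$ is centered exactly at position $j$ (the paper's $|\beta_2|=2(n-j)+1$ sub-case), the mirrored-decrement string is not strictly smaller than $\alpha$ but equal to it, so the contradiction there is with the asymmetry of $\alpha$ rather than with bracelet minimality.
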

\begin{proof}
From Lemma~\ref{lem:lastnon}, $\beta = \lastNonMax(\alpha)$ is a bracelet.  If $\beta$ is symmetric, then from Lemma~\ref{lem:pal} it can be written as $\beta_1\beta_2$ for some palindromes $\beta_1$ and $\beta_2$.  Since $\tt{a}_1 < \tt{a}_n$ because $\alpha \in \A_k(n)$, we clearly have $\beta_1$ and $\beta_2$ are non-empty. 
\begin{itemize}
\item[(i)] Let $\tt{a}_n < k{-}1$. Suppose $\beta$ is symmetric. Let $\gamma$ be $\beta_2$ with its last symbol decremented; $\alpha = \beta_1\gamma$. Then $(\gamma \beta_1)^R \leq \alpha$, contradicting that $\alpha \in \A_k(n)$. 
\item[(ii)] Let $\tt{a}_j < k{-}2$.  Suppose $\beta$ is symmetric. 
If $|\beta_1| > j$, then it must start and end
with $(k{-}1)$ and include $\tt{a}_j$. This contradicts that $\beta$ is a bracelet.  If $|\beta_1| = j$, then since $\beta_1$ is a palindrome, the length $j$ prefix of $\alpha$ is greater than its reversal, contradicting that $\alpha$ is a bracelet.  Thus $\beta_2$ has suffix $(\tt{a}_j{+}1) \tt{a}_{j+1}\cdots \tt{a}_n$. If $|\beta_2| = 2(n-j)+1$, then the
suffix of $\alpha$ with the same length is also a palindrome, which implies $\alpha$ is symmetric, contradiction.  Otherwise, $|\beta_2| > 2(n-j)+1$.  Let $\gamma$ denote the suffix of $\alpha$ of length $|\beta_2|$.  It must be that $\gamma > \gamma^R$, which implies that $(\gamma\beta_1)^R < \alpha$, contradicting that $\alpha$ is a bracelet. Thus, $\beta \in \A_k(n)$.
\item[(iii)] Let $\tt{a}_1\cdots \tt{a}_{\ell} \neq (\tt{a}_1\cdots \tt{a}_{\ell})^R$. By the definition of a bracelet, $\tt{a}_1\cdots \tt{a}_{\ell} < (\tt{a}_1\cdots \tt{a}_{\ell})^R$.   If $\tt{a}_j < k{-}2$, then $\beta$ is in $\A_k(n)$ by the previous case.  Consider $\tt{a}_j = k{-}2$. Suppose $\beta  = \tt{a}_1\cdots \tt{a}_{\ell}(k{-}1)^{n-\ell}$ is symmetric.  We must have $|\beta_1| < \ell$. Since $\tt{a}_{\ell} < (k{-}1)$, $|\beta_2| \geq 2(n-\ell)+1$ .  But this implies that $\alpha$ is not a bracelet, contradiction. Thus, $\beta \in \A_k(n)$.
\end{itemize}
Finally, if $\tt{a}_n = k{-}1$, $\tt{a}_j = k{-}2$, and $\tt{a}_1\cdots \tt{a}_{\ell} = (\tt{a}_1\cdots \tt{a}_{\ell})^R$, then $\lastNonMax(\alpha) = \tt{a}_1\cdots \tt{a}_{\ell}(k{-}1)^{n-\ell}$, which is symmetric by Lemma~\ref{lem:pal}.
\end{proof}
%

%

\begin{lemma}  \label{lem:firstNonParent}
If $\alpha \in \A_k(n)$ such that 
$\lastNonMax(\alpha) \notin \A_k(n)$, $\lastSymbol(\alpha) \notin \A_k(n)$ and $\firstNonMin(\alpha) \in \A_k(n)$, 
then $\lastNonMax(\firstNonMin(\alpha)) \in \A_k(n)$ or $\lastSymbol(\firstNonMin(\alpha)) \in \A_k(n)$.
\end{lemma}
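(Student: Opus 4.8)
The plan is to use Lemma~\ref{lem:prop2} twice: once in contrapositive form to read off the shape of $\alpha$, and once constructively on $\beta := \firstNonMin(\alpha)$. Let $i$, $j$, $\ell$ be the indices of the first non-zero, the last non-$(k{-}1)$, and the second-last non-$(k{-}1)$ symbols of $\alpha$ (these exist: an asymmetric bracelet is neither $0^n$ nor $(k{-}1)^n$, and cannot have just one non-$(k{-}1)$ symbol, as each such necklace is symmetric). Since $\lastNonMax(\alpha)$ is a bracelet by Lemma~\ref{lem:lastnon} yet lies outside $\A_k(n)$, it is symmetric, so by the dichotomy of Lemma~\ref{lem:prop2} every one of the conditions (i)--(iii) fails for $\alpha$; that is, $\tt{a}_n = k{-}1$, $\tt{a}_j = k{-}2$, and $\tt{a}_1\cdots\tt{a}_\ell = (\tt{a}_1\cdots\tt{a}_\ell)^R$. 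Now $\beta$ agrees with $\alpha$ except that $\tt{a}_i$ is decremented, and a decrement can never create a new $k{-}1$; if $i=j$ then the last non-$(k{-}1)$ symbol of $\beta$ drops below $k{-}2$ and Lemma~\ref{lem:prop2}(ii) gives $\lastNonMax(\beta)\in\A_k(n)$ at once, while $i>j$ is impossible (it would force $\alpha = 0^{i-1}(k{-}1)^{n-i+1}$, which is symmetric). So I may assume $i<j$, whence $\beta$ still ends in $(k{-}2)(k{-}1)^{n-j}$ with its last non-$(k{-}1)$ at $j$ of value $k{-}2$. Thus conditions (i) and (ii) of Lemma~\ref{lem:prop2} fail for $\beta\in\A_k(n)$, and the whole question reduces to condition (iii): is the prefix of $\beta$ up to its second-last non-$(k{-}1)$ symbol a palindrome?

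I then split on exactly this. In the first (typical) case that prefix is \emph{not} a palindrome, so Lemma~\ref{lem:prop2}(iii) applies directly to $\beta$ and yields $\lastNonMax(\firstNonMin(\alpha))\in\A_k(n)$, the first disjunct. The content here is that decrementing position $i$ destroys the inherited palindrome $\tt{a}_1\cdots\tt{a}_\ell$: because $\tt{a}_i=\tt{a}_{\ell-i+1}$ and only position $i$ is lowered, the two mirror entries disagree unless $i=\ell-i+1$; the boundary situation $\tt{a}_1\cdots\tt{a}_\ell = 0^\ell$ (so the first non-zero is the $k{-}1$ at index $\ell{+}1$) likewise leaves the non-palindromic prefix $0^\ell(k{-}2)$ in $\beta$, and so also lands here.

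The single remaining case, which is the real obstacle, is when the decrement lands on the \emph{center} of the palindrome, forcing $\ell=2i{-}1$ and $\tt{a}_1\cdots\tt{a}_\ell = 0^{i-1}\tt{a}_i\,0^{i-1}$; here $\lastNonMax(\beta) = 0^{i-1}(\tt{a}_i{-}1)0^{i-1}(k{-}1)^{n-\ell}$ is genuinely symmetric (two palindromes, cf.\ Lemma~\ref{lem:pal}), so I must instead establish the second disjunct $\lastSymbol(\beta)\in\A_k(n)$. The plan is to compute $\lastSymbol$ of $\alpha$ and of $\beta$ explicitly as lexicographically least rotations (since $\tt{a}_n=\tt{b}_n=k{-}1$, each wraps the trailing symbol to $0$): one finds $\lastSymbol(\alpha)=0^i\tt{a}_i0^{i-1}(k{-}1)^{p}(k{-}2)(k{-}1)^{q-1}$ with $p=j{-}\ell{-}1$ and $q=n{-}j\ge1$, and that (when $\tt{a}_i\ge2$) $\lastSymbol(\beta)=\firstNonMin(\lastSymbol(\alpha))$. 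The crucial point is that the so-far-unused hypothesis $\lastSymbol(\alpha)\notin\A_k(n)$ is extremely restrictive in this rigid configuration: comparing $\lastSymbol(\alpha)$ with the necklace of its reversal shows that it is an asymmetric bracelet, hence in $\A_k(n)$ (contradiction), \emph{unless} the trailing block is as short as possible ($q=1$) and $\tt{a}_i$ is near-maximal, and exactly these surviving patterns make the reversal comparison for $\lastSymbol(\beta)$ come out strictly, certifying $\lastSymbol(\beta)\in\A_k(n)$. Carrying out these two rotation computations cleanly and checking the handful of surviving parameter patterns (together with degenerate indices such as $i=1$ and the values $\tt{a}_i\in\{1,k{-}2,k{-}1\}$) is the main technical burden; everything else is direct manipulation of the necklace and bracelet definitions.
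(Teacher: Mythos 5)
Your skeleton matches the paper's proof: you invoke Lemma~\ref{lem:prop2} in contrapositive form to force $\tt{a}_n=k{-}1$, $\tt{a}_j=k{-}2$ and the palindromic prefix $\tt{a}_1\cdots\tt{a}_\ell$, and then split on whether the decrement at position $i$ destroys that palindrome. Your non-center cases are handled correctly, and in places more cleanly than the paper (e.g., disposing of $i=j$ via Lemma~\ref{lem:prop2}(ii), and getting $\lastNonMax(\beta)\in\A_k(n)$ from condition (iii) when $i=\ell+1$, where the paper instead computes $\lastSymbol(\beta)$ explicitly). The problem is the center case $\ell=2i{-}1$, $\tt{a}_1\cdots\tt{a}_\ell=0^{i-1}\tt{a}_i0^{i-1}$: there you only outline a plan (compute the two rotations, use the so-far-unused hypothesis $\lastSymbol(\alpha)\notin\A_k(n)$ to prune parameters, check the survivors) and explicitly defer ``the main technical burden.'' That deferred case is where the entire content of the lemma lives --- it is exactly where the paper does its real work, deriving $\tt{a}_i\geq \tt{a}_{n-1}\geq k{-}2$ and then pinning $\alpha$ down to $0^{i-1}(k{-}2)0^{i-1}(k{-}2)(k{-}1)$ --- so as submitted your proof has a genuine gap, not just an omitted routine verification.

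Worse, the gap cannot be closed along the lines you describe, because the outcome you claim for the surviving patterns is false when $k=3$. Take $k=3$, $n=5$, $\alpha=01012$ (more generally $0^{i-1}10^{i-1}12$ with $n=2i{+}1$). Then $\alpha\in\A_3(5)$, $\lastNonMax(\alpha)=01022$ and $\lastSymbol(\alpha)=00101$ are both symmetric, and $\firstNonMin(\alpha)=00012\in\A_3(5)$, so every hypothesis holds; moreover this $\alpha$ sits precisely in your surviving configuration ($q=1$, $\tt{a}_i=k{-}2$). Yet $\lastNonMax(00012)=00022$ and $\lastSymbol(00012)=00001$ are both symmetric, so no reversal comparison can ever certify $\lastSymbol(\beta)\in\A_k(n)$ there. (This example also exposes a flaw at the same spot in the paper's own proof, which asserts that $0^{i}(k{-}3)0^{i-1}(k{-}2)$ lies in $\A_k(n)$; that is correct for $k\geq 4$ but for $k=3$ this string degenerates to $0^{2i}1$, which is symmetric --- indeed the lemma as stated fails for $k=3$.) So your instinct about where the difficulty sits was right, but the proposal both leaves that case unproven and commits to a resolution that a concrete counterexample rules out; any repair has to either restrict to $k\geq 4$ or allow $\secondLastNonMax$ in the conclusion.
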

\begin{proof} 
Consider $\alpha =\tt{a}_1\tt{a}_2\cdots \tt{a}_n \in \A_k(n)$ where $i$ is the index of the first non-zero,  $j$ is the index of the last non-$(k{-}1)$, and $\ell$ is the index of the second-last non-$(k{-}1)$. 
Since $\lastNonMax(\alpha) \notin \A_k(n)$, from Lemma~\ref{lem:prop2}, $\tt{a}_j = k{-}2$, $\tt{a}_n = k{-}1$, and $\tt{a}_1\cdots \tt{a}_{\ell}= (\tt{a}_{1} \cdots \tt{a}_{\ell})^R$. 
Since $\alpha \in \A_k(n)$, 
$\alpha = \tt{a_1}\cdots \tt{a}_{\ell} (k{-}1)^{j-\ell-1}(k{-}2)(k{-}1)^{n-j}$ where $j-\ell-1 < n-j$.   
This implies that $\lastSymbol(\alpha) = 
0\tt{a}_1\cdots \tt{a}_{n-1}$ has prefix $0^i$; it is a necklace since $\alpha$ has no $0^{i}$ substring.
Since $\lastSymbol(\alpha) \notin \A_k(n)$,
$\tt{a}_i\cdots \tt{a}_{n-1} \geq (\tt{a}_i\cdots \tt{a}_{n-1})^R$, which means that $\tt{a}_i \geq \tt{a}_{n-1} \geq k{-}2$.
Let $\beta = \firstNonMin(\alpha)$.
If $i \geq n-1$, then $\alpha = 0^{n-2}(k{-}2)(k{-}1)$
and $\lastNonMax(\beta) = \alpha$.  Thus, assume $i < n-1$ and consider two cases.
\begin{itemize}
    \item Suppose $i > \ell$.  Then $i = \ell+1$ and $\alpha = 0^{\ell}(k{-}1)^{j-i}(k{-}2)(k{-}1)^{n-j}$.  Since $j-i < n-j$ and $\tt{a}_i\cdots \tt{a}_{n-1} \geq (\tt{a}_i\cdots \tt{a}_{n-1})^R$, $j-i = n-j-1 > 1$. Thus, $\lastSymbol(\beta) = 0^{\ell}(k{-}2)(k{-}1)^{n-i}$ which is in $\A_k(n)$.
    \item Suppose $i \leq \ell$.  If the first $\ell$ symbols of $\beta$ are not a palindrome, then by Lemma~\ref{lem:prop2}, $\lastNonMax(\beta) \in \A_k(n)$. Otherwise,  $\ell$ is odd, $i = (\ell+1)/2$, and $\alpha$ has prefix $0^{i-1}\tt{a}_i0^{i-1}$. 
    If $\tt{a}_i = k{-}2$, since $\tt{a}_i\cdots \tt{a}_{n-1} \geq (\tt{a}_i\cdots \tt{a}_{n-1})^R$, we have $\tt{a}_{n-1} = k{-}2$, $\alpha =  0^{i-1}(k{-}2)0^{i-1}(k{-}2)(k{-}1)$, and $\lastSymbol(\beta) = 0^{i}(k{-}3)0^{i-1}(k{-}2)$ which is in $\A_k(n)$. Otherwise, $\tt{a}_i = k{-}1$ and  
    $\tt{a}_{n-1} = k{-}1$ which means $i > 1$, $j < n-1$ and $\tt{a}_{n-2} \geq k{-}2$. But this contradicts that  $\tt{a}_i\cdots \tt{a}_{n-1} \geq (\tt{a}_i\cdots \tt{a}_{n-1})^R$.
\end{itemize}

\vspace{-0.1in}
\end{proof}

\section{A cycle-joining construction for orientable sequences}  \label{sec:parent}

In this section we provide a parent rule to create a cycle-joining tree with nodes $\A_k(n)$. We then apply the tree to  derive an $O(n)$-time successor rule for a corresponding universal cycle, which is an $\OS_k(n)$ of length $L_k(n)$.  

\subsection{A simple parent rule} \label{sec:parentrule}

The parent rule for the binary case defined in~\cite{G&S-Orientable:2024} uses $0^{n-4}1011$ as the root, and the parent of each non-root node $\alpha \in \A_2(n)$ is the first string in the list $\langle \firstNonMin(\alpha),\lastSymbol(\alpha),\lastNonMax(\alpha)\rangle$ that is also in $\A_2(n)$.  However, there are several issues when generalizing to a larger alphabet. In particular, the rule is not well-defined for $k=3$, and the corresponding cycle-joining tree does not have the Chain Property.  We will demonstrate each of these short-comings before deriving a new parent rule for alphabets of arbitrary size.

Assume $n,k \geq 3$. Let \blue{$\ktreeroot{n}$} $= 0^{n-2}(k{-}2)(k{-}1)$ denote the root of our upcoming cycle-joining tree.  The following example is for this specific root; however, similar examples exist for any arbitrary root.
%
\begin{exam} 
Consider any parent rule with root $r_{n,k}$ where the the parent of $\alpha \in \A_k(n)$ is the first string in a list starting with $\langle \firstNonMin(\alpha),\lastSymbol(\alpha), \ldots \rangle$  that is also in $\A_k(n)$. Let $\alpha = 01220\blue{2}$ and $\beta = 0\blue{1}0122$.  Then the parent of $\alpha$ is $\lastSymbol(\alpha) = 000122$ and 
the parent of $\beta$ is $\firstNonMin(\beta) = 000122$; $\alpha$ is joined  via conjugate pair ($201220, \blue{001220}$) and $\beta$ is joined via conjugate pair ($101220,\blue{001220}$). The two conjugate pairs share a string, and thus the corresponding cycle-joining tree does not have the Chain Property.    
\end{exam} \normalsize

The next example, and following lemma, demonstrate that the four functions $\firstSymbol$, $\firstNonMin$, $\lastSymbol$, and $\lastNonMax$ alone are not sufficient to define a parent rule with root $\ktreeroot{}$ when $k=3$.
%
\begin{exam} \label{exam:badone} 
 Consider $\alpha = 001022010012$ which is in $\A_3(12)$. No matter how we change the first, last, first non-zero, or last non-($k{-}1$) symbol in $\alpha$, the resulting string is not in $\A_3(12)$. In particular:
 \begin{itemize}
     
     \item $\firstSymbol(\alpha) = 0012\blue{2}0102201$ is not a bracelet,
    \item $\lastSymbol(\alpha) = \blue{0}00102201001$ is not a bracelet,
     \item $\firstNonMin(\alpha) = 00\blue{0}0022010012$ is not a bracelet, and
     \item $\lastNonMax(\alpha) = 0010220100\blue{2}2$ is symmetric.   
   \end{itemize}  
Such strings are uncommon. There are only 82 such strings in $\A_3(20)$ and they all have suffix 0012.    


\end{exam} \normalsize
\begin{lemma} \label{lem:counter}
Let $\alpha = 00102^{(n-10)} 010012$ for $n\geq 12$ and $k=3$.  Then $\alpha$ is in $\A_k(n)$ and each of
$\firstSymbol(\alpha)$, $\lastSymbol(\alpha)$, $\firstNonMin(\alpha)$, and $\lastNonMax(\alpha)$ is not in $\A_k(n)$.

\end{lemma}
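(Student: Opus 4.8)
The plan is to prove the two halves separately: first that $\alpha = 0010\,2^{n-10}\,010012 \in \A_3(n)$, and then that applying each of the four parent functions destroys either the bracelet property or the asymmetry. For the first three functions, which all return necklaces (by definition in the case of $\firstSymbol,\lastSymbol$, and because $\alpha$ is a non-root necklace in the case of $\firstNonMin$), it suffices to exhibit a \emph{single} rotation of the image's reversal that is lexicographically smaller than the image itself, since that already shows the image is not the lex-least element of its bracelet class. For $\lastNonMax$ I will instead use Lemma~\ref{lem:lastnon} to see the image is a bracelet and then exhibit a decomposition into two palindromes, invoking Lemma~\ref{lem:pal} to conclude it is symmetric.

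To show $\alpha \in \A_3(n)$ I would first locate the cyclic occurrences of the substring $00$, which are the only candidates for the start of the lex-least rotation: there are exactly two, the leading $00$ and the $00$ inside the suffix $010012$. Comparing the two corresponding rotations (they first differ at the fourth symbol, $0$ versus $2$) shows $\alpha$ itself is the necklace of $[\alpha]$. I would then compute $\alpha^R = 210010\,2^{n-10}\,0100$, find the necklace of $[\alpha^R]$ the same way, obtaining $0010\,2^{n-10}\,010021$, and compare it to $\alpha$: the two strings agree on the length-$(n-2)$ prefix $0010\,2^{n-10}\,0100$ and first differ at position $n-1$, where $\alpha$ carries a $1$ and the reversal's necklace a $2$. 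Hence $\alpha$ is strictly smaller than the necklace of $[\alpha^R]$, so $\alpha$ is the lex-least element of $[\alpha]\cup[\alpha^R]$ and lies outside $[\alpha^R]$; that is, $\alpha$ is an asymmetric bracelet.

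For the parent functions I record $i=3$ (index of the first non-zero) and $j=n-1$ (index of the last non-$2$), which determine the four images explicitly: $\firstNonMin(\alpha) = 0000\,2^{n-10}\,010012$; $\firstSymbol(\alpha)$ is the necklace $001\,22\,010\,2^{n-10}\,01$; $\lastSymbol(\alpha)$ is the necklace $00010\,2^{n-10}\,01001$; and $\lastNonMax(\alpha) = 0010\,2^{n-10}\,010022$. For each of the first three I write down the reversal, rotate it to begin at its longest trailing run of $0$'s, and compare with the image. In the three cases the competing rotation agrees with the image through a short prefix and then drops below it at position $6$, $4$, and $6$ respectively, where the image still sits inside its $2^{n-10}$ block while the competitor shows a $1$ or a $0$; this proves none of the three is a bracelet. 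Finally, for $\lastNonMax(\alpha)$ I exhibit the palindrome decomposition $0010\,2^{n-10}\,010022 = \bigl(0010\,2^{n-10}\,0100\bigr)(22)$, so by Lemma~\ref{lem:pal} it is symmetric and thus not in $\A_3(n)$.

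The routine part is the cyclic index bookkeeping; the one place to be careful is the boundary $n=12$. Each critical comparison for the first three functions relies on the block $2^{n-10}$ covering the very position at which the image and its competitor diverge. The binding requirement comes from $\firstNonMin$, whose divergence at position $6$ needs $n-10 \ge 2$; since the hypothesis is exactly $n \ge 12$ this holds with no slack, so the main obstacle is simply to confirm that each divergence position falls inside the $2$-run (and not merely at its edge) for every $n \ge 12$, rather than to perform any deep argument.
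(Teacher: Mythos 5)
Your proof is correct and takes essentially the same route as the paper's: the paper likewise checks that $\alpha$ is an asymmetric bracelet, that the $\firstSymbol$, $\lastSymbol$, and $\firstNonMin$ images (which you compute identically) fail to be bracelets, and that $\lastNonMax(\alpha)$ is a symmetric bracelet via the two-palindrome characterization of Lemma~\ref{lem:pal} --- you simply make the paper's ``straightforward to observe'' comparisons explicit, including the correct observation that $\firstNonMin$ is the case that forces $n\geq 12$. One cosmetic slip only: in the $\firstSymbol$ case the divergence at position $4$ pits the competitor's $0$ against a $2$ that comes from the wrapped-around final symbol of $\alpha$ rather than from the $2^{n-10}$ block, but the comparison holds for all $n\geq 12$ regardless.
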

\begin{proof}
It is straightforward to observe that $\alpha$ is a bracelet by definition and is asymmetric by Lemma~\ref{lem:pal}.  Applying the definitions, $\firstSymbol(\alpha) = 0012\blue{2}0102^{(n-10)}01$, $\lastSymbol(\alpha) = \blue{0}00102^{(n-10)}01001$,  and $\firstNonMin(\alpha) = 00\blue{0}002^{(n-10)}010012$ are all not bracelets, and 
$\lastNonMax(\alpha) = 00102^{(n-10)}0100\blue{2}2$ is symmetric.
\end{proof}

 Lemma~\ref{lem:counter}, demonstrates that for $k=3$, no parent rule exists for $\A_3(n)$ that applies only a combination of the four rules from Section~\ref{sec:cycle-join}. Thus, we define an additional function in order to define our parent rule for $\A_3(n)$. Let $\ell$ denotes the second last symbol in $\alpha$ that is not $(k{-}1)$, and define
\begin{itemize}
    \item $\secondLastNonMax(\alpha) = \tt{a}_1\cdots\tt{a}_{\ell-1}\blue{(\tt{a}_{\ell}{+}1)}\tt{a}_{\ell+1}\cdots \tt{a}_n$.
\end{itemize}
This function is well-defined, since each $\alpha  \in \A_k(n)$ must contain at least two symbols that are not $k{-}1$.
Recall Example~\ref{exam:badone}, where $\alpha =  001022010012$. Observe that $\secondLastNonMax(\alpha) = 001022010\blue{1}12$, which is in $\A_3(12)$.
%

%
%
\begin{result}
\noindent 
{\bf Parent rule for cycle-joining $\A_k(n)$ with root $\ktreeroot{n}$.}  ~~
If $\alpha = \tt{a}_1\tt{a}_{2}\cdots \tt{a}_n \in \A_k(n) \setminus \{\ktreeroot{} \}$,  
then define
\blue{$\parent(\alpha)$} to be the first string
that is an asymmetric bracelet in the list%
\[ \langle ~\lastNonMax(\alpha),~~ \lastSymbol(\alpha),~~ \firstNonMin(\alpha),~~ \secondLastNonMax(\alpha) ~\rangle. \] 

\vspace{-0.15in}
 \end{result}

%
\noindent
In the upcoming Lemma~\ref{lem:well-defined}, we demonstrate that $\parent(\alpha)$ is well-defined and interestingly, that $\secondLastNonMax(\alpha)$ is only necessary for $k=3$.
%
%
%
\noindent
The upcoming Theorem~\ref{thm:kcycle} demonstrates that the above parent rule induces a cycle-joining tree, which we denote by $\cycletree_k(n)$.  Moreover, we demonstrate $\cycletree_k(n)$ has the Chain Property in Theorem~\ref{thm:chain}.
In proving these results, 
we do not consider the case when $n=3$ and $k=3$, since $\A_3(3) = \{012\}$. Figure~\ref{fig:cycle63} illustrates $\cycletree_3(6)$. 

%



\begin{figure}[ht]
    \centering
    \resizebox{6.0in}{!}{\includegraphics{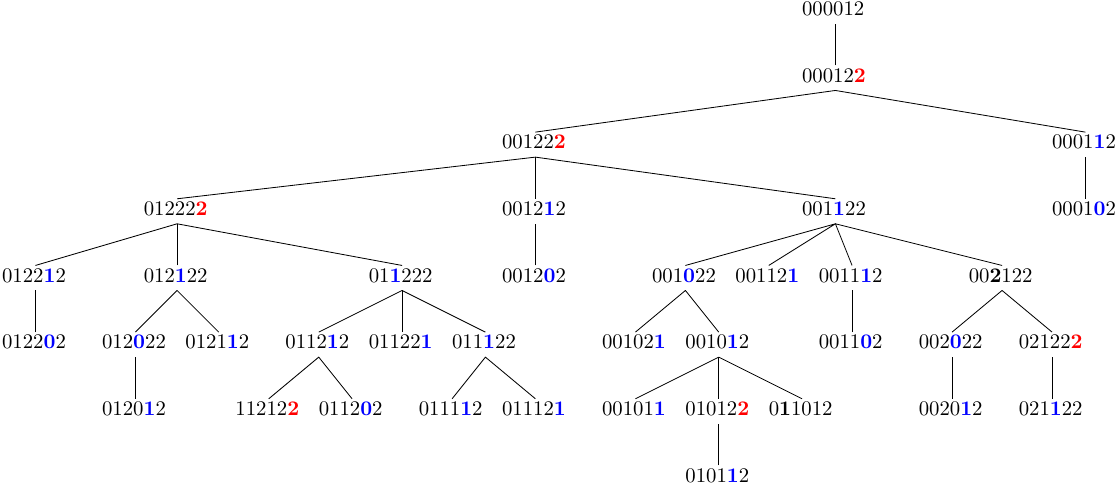}}
    \caption{The cycle-joining tree $\cycletree_{3}(6)$. 
    Each node differs from its parent (cyclically) at the highlighted symbol. The symbols highlighted in blue indicate that $\parent(\alpha) = \lastNonMax(\alpha)$; the symbols highlighted in red  indicate that $\parent(\alpha) = \lastSymbol(\alpha)$; the symbols highlighted in bold black indicate that $\parent(\alpha) = \firstNonMin(\alpha)$. 
    There are no nodes in this tree such that $\parent(\alpha) = \secondLastNonMax(\alpha)$; the first instance of such a case arises when $n=12$.}
    \label{fig:cycle63}
\end{figure}

\begin{lemma}\label{lem:well-defined}
   Let $\alpha = \tt{a}_1\tt{a}_2\cdots \tt{a}_n \in \A_k(n) \setminus \{r_{n,k}\}$ for some $n \geq 3$, $k \geq 4$ or $n\geq 4$, $k = 3$ such that $\lastNonMax(\alpha)$ and $\lastSymbol(\alpha)$ are not in $\A_k(n)$.  If $k \geq 4$, then   $\firstNonMin(\alpha) \in \A_k(n)$.
    Furthermore, if $k=3$ and $\firstNonMin(\alpha) \notin \A_3(n)$, then $\alpha=0\gamma012$ where $\gamma$ is a palindrome, and $\secondLastNonMax(\alpha) \in \A_3(n)$.
\end{lemma}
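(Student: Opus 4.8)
The plan is to run the argument through the same opening reductions used in the proof of Lemma~\ref{lem:firstNonParent}, which starts from exactly the two hypotheses $\lastNonMax(\alpha)\notin\A_k(n)$ and $\lastSymbol(\alpha)\notin\A_k(n)$. Let $i,j,\ell$ be the indices of the first non-zero, the last non-$(k-1)$, and the second-last non-$(k-1)$ symbols of $\alpha$. From Lemma~\ref{lem:lastnon} and Lemma~\ref{lem:prop2}, $\lastNonMax(\alpha)\notin\A_k(n)$ forces $\tt{a}_n=k-1$, $\tt{a}_j=k-2$, and $\tt{a}_1\cdots\tt{a}_\ell=(\tt{a}_1\cdots\tt{a}_\ell)^R$, so $\alpha=\tt{a}_1\cdots\tt{a}_\ell (k-1)^{j-\ell-1}(k-2)(k-1)^{n-j}$ with $j-\ell-1<n-j$. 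Exactly as in Lemma~\ref{lem:firstNonParent}, $\lastSymbol(\alpha)=0^i\tt{a}_i\cdots\tt{a}_{n-1}$ is a necklace, and $\lastSymbol(\alpha)\notin\A_k(n)$ gives $\tt{a}_i\cdots\tt{a}_{n-1}\geq(\tt{a}_i\cdots\tt{a}_{n-1})^R$, whence $\tt{a}_i\geq\tt{a}_{n-1}\geq k-2$.

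The conceptual heart is a clean numerical clash. Suppose $\firstNonMin(\alpha)\notin\A_k(n)$. Since $\alpha\in\A_k(n)$, Lemma~\ref{lem:firstnon} shows this is impossible whenever $\tt{a}_i>1$; as $\tt{a}_i\geq 1$ always, I conclude $\tt{a}_i=1$. Combined with the bound $\tt{a}_i\geq k-2$ this yields $k-2\leq 1$, i.e. $k=3$. This settles the first assertion immediately: for $k\geq 4$ the supposition is untenable, so $\firstNonMin(\alpha)\in\A_k(n)$.

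For the $k=3$ case I assume $\firstNonMin(\alpha)\notin\A_3(n)$ and pin down $\alpha$. From $\tt{a}_i=1$ and $\tt{a}_i\geq\tt{a}_{n-1}\geq 1$ I get $\tt{a}_{n-1}=1$; since $\tt{a}_{n-1}=2$ whenever $n-j\geq 2$, this forces $n-j=1$, i.e. $j=n-1$, and then $j-\ell-1<n-j=1$ forces $\ell=n-2$. Hence $\alpha=\tt{a}_1\cdots\tt{a}_{n-2}\,1\,2$ with $\tt{a}_1\cdots\tt{a}_{n-2}$ a palindrome, so $\tt{a}_{n-2}=\tt{a}_1\in\{0,1\}$. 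The remaining task is to exclude $\tt{a}_1=1$. In that case $\alpha\in\{1,2\}^n$ has suffix $112$; the case $n=4$ gives $\alpha=1112$, which is symmetric (contradiction), and for $n\geq 5$ the factor $11$ at positions $n-2,n-1$ produces a rotation beginning $112\cdots$ strictly smaller than $\alpha=1\tt{a}_2\cdots$ unless $\tt{a}_2=1$, forcing $\tt{a}_2=1$. Writing $\alpha=1w12$, a direct computation then shows $\firstNonMin(\alpha)=0w12$ is a necklace whose reverse-necklace is $0\,2\,1\,w^R$; since $w$ begins with $\tt{a}_2=1<2$ we have $0w12<021w^R$, so $\firstNonMin(\alpha)\in\A_3(n)$, contradicting the assumption. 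Therefore $\tt{a}_1=\tt{a}_{n-2}=0$, and $\alpha=0\gamma012$ with $\gamma=\tt{a}_2\cdots\tt{a}_{n-3}$ a palindrome.

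Finally I verify $\secondLastNonMax(\alpha)\in\A_3(n)$ for $\alpha=0\gamma012$. Here $\ell=n-2$, so $\beta:=\secondLastNonMax(\alpha)=0\gamma112$, which is readily checked to be a bracelet. If $\beta$ were symmetric, Lemma~\ref{lem:pal} gives $\beta=\beta_1\beta_2$ with $\beta_1,\beta_2$ palindromes, and the suffix $112$ forces $|\beta_2|>3$. Because $\alpha$ is a necklace containing $012$ it has no prefix $02$, which rules out $\beta_2=2112$; and $\beta_2=21112$ would make $\alpha$ begin and end with $012$, hence $\alpha=(012)^t$, incompatible with the form of $\beta_2$. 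For $|\beta_2|>5$, the length-$|\beta_2|$ suffix $\delta$ of $\alpha$ agrees with the palindrome $\beta_2$ except at one interior position, and comparing $\delta$ with $\delta^R$ gives $\delta>\delta^R$, contradicting that $\alpha$ is a bracelet. Thus $\beta$ is asymmetric and lies in $\A_3(n)$. I expect the two fiddly comparison arguments---excluding $\tt{a}_1=1$ and the palindrome-decomposition casework for $\beta$---to be the main obstacles, while the genuinely novel idea is the $\tt{a}_i=1$ versus $\tt{a}_i\geq k-2$ collision that pins $k=3$.
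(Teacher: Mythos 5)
Your proof is correct, and its skeleton matches the paper's proof of this lemma almost step for step: the same opening reductions (Lemma~\ref{lem:prop2} forcing $\tt{a}_n=k{-}1$, $\tt{a}_j=k{-}2$, and the palindromic prefix $\tt{a}_1\cdots\tt{a}_{\ell}$; the necklace form $0^i\tt{a}_i\cdots\tt{a}_{n-1}$ of $\lastSymbol(\alpha)$ forcing $\tt{a}_i\geq\tt{a}_{n-1}\geq k{-}2$), the same use of Lemma~\ref{lem:firstnon} to settle $k\geq 4$, the same pinning of $j=n{-}1$ and $\ell=n{-}2$ for $k=3$, and the same closing argument that $0\gamma112$ is an asymmetric bracelet (excluding $\beta_2\in\{2112,21112\}$, then the $\delta>\delta^R$ comparison). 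The one place you genuinely diverge is in ruling out $\tt{a}_1=1$ when $k=3$: the paper uses the reversal inequality inherited from $\firstNonMin(\alpha)\notin\A_k(n)$ to conclude $\tt{a}_2=\tt{a}_n=2$, and then appeals to the structural claim that the only bracelet beginning and ending with $12$ is $(12)^s$, which is symmetric (the paper's text says ``asymmetric,'' evidently a typo), contradicting $\alpha\in\A_k(n)$. You instead derive $\tt{a}_2=1$ from the necklace condition (the rotation starting at the $11$ in positions $n{-}2,n{-}1$), and then exhibit $\firstNonMin(\alpha)=0w12$ directly as an asymmetric bracelet---its unique $0$ makes it and its reverse necklace $021w^R$ immediate to compare---contradicting the standing assumption. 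Your variant is self-contained and sidesteps the paper's unproved classification of bracelets that start and end with $12$, at the cost of a small $n=4$ versus $n\geq 5$ split; both are valid. One shared gloss to be aware of: like the paper, you assert without detail that $0\gamma112$ is a bracelet, and your step ``$\delta>\delta^R$ contradicts that $\alpha$ is a bracelet'' implicitly uses that the complementary prefix $\beta_1$ is a palindrome, so that $\beta_1\delta^R\in[\alpha^R]$ beats $\alpha=\beta_1\delta$; since you mirror the paper's own level of rigor there, this is acceptable.
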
    
\begin{proof}  
Consider $\alpha$ where $i$ is the index of the first non-zero,  $i'$ is the index of the second non-zero,  $j$ is the index of the last non-$(k{-}1)$, and $\ell$ is the index of the second-last non-$(k{-}1)$. 
Since $\lastNonMax(\alpha) \notin \A_k(n)$, from Lemma~\ref{lem:prop2}, $\tt{a}_j = k{-}2$, $\tt{a}_n = k{-}1$, and $\tt{a}_1\cdots \tt{a}_{\ell}= (\tt{a}_{1} \cdots \tt{a}_{\ell})^R$. 
Thus, since $\alpha \in \A_k(n)$, 
$\alpha = \tt{a_1}\cdots \tt{a}_{\ell} (k{-}1)^{j-\ell-1}(k{-}2)(k{-}1)^{n-j}$ where $j-\ell-1 < n-j$.   
 Thus, if $j=n-1$, then $\ell = n-2$.  This implies that $\lastSymbol(\alpha) = 
0\tt{a}_1\cdots \tt{a}_{n-1}$ with prefix $0^i$, which is a necklace since $\alpha$ has no $0^{i}$ substring.
Since $\lastSymbol(\alpha) \notin \A_k(n)$,
$\tt{a}_i\cdots \tt{a}_{n-1} \geq (\tt{a}_i\cdots \tt{a}_{n-1})^R$, which means that $\tt{a}_i \geq \tt{a}_{n-1} \geq k{-}2$.
If $k\geq 4$, then $\tt{a}_i > 1$, and thus by Lemma~\ref{lem:firstnon} $\firstNonMin(\alpha) \in \A_k(n)$.
If $k=3$, suppose $\firstNonMin(\alpha) \notin \A_k(n)$.
Then by Lemma~\ref{lem:firstnon}, $\tt{a}_i = 1$, and $\tt{a}_{i'}\cdots \tt{a}_n \geq (\tt{a}_{i'}\cdots \tt{a}_n)^R$.  Since $1 = \tt{a}_i \geq \tt{a}_{n-1}$, $j=n-1$ and $\tt{a}_{n-1} = (k{-}2) = 1$.
If $i=1$ then $\alpha$ does not contain any 0s, and thus $i'=2$. Thus $\tt{a}_2 = \tt{a}_n = 2$. 
The only bracelet that both starts and ends with 12 is of the form $(12)^s$ which is asymmetric, contradicting that $\alpha \in \A_k(n)$.
Thus $i > 1$ and $\tt{a}_1 = \tt{a}_{\ell} = 0$.
Recall $j=n-1$ which implied $\ell = n-2$.
Thus, $\alpha=0\gamma012$ where $\gamma$ is a palindrome. 
Let $\beta = \secondLastNonMax(\alpha)$; it has suffix 112.
Observe $\beta$ is a bracelet by applying the definition.
  Suppose $\beta$ is symmetric. By Lemma~\ref{lem:pal}, $\beta = \beta_1\beta_2$ where $\beta_1$ and $\beta_2$ are both palindromes. Clearly, $|\beta_2| > 3$.  Note that $\alpha$ does not have prefix 02, since $\alpha$ is a necklace containing 012 as a substring.
If $\beta_2 = 2112$, then $\alpha$ has prefix $02$, contradiction.  If $\beta_2 = 21112$, then $\alpha$ starts and ends with 012.  Since $\alpha$ is a necklace, this implies that $\alpha = (012)^t$ for some integer $t$, which contradicts the form of $\beta_2$. Thus $|\beta| > 5$. Let $\delta$ denote the suffix of $\alpha$ of length $|\beta_2|$. Then $\alpha$ is not a bracelet since $\delta > \delta^R$ and thus $\alpha^R < \alpha$. Contradiction.  Thus, $\beta$ is asymmetric and in $\A_3(n)$.
%
\end{proof}

Given a node $\alpha$ in $\cycletree_k(n)$, we say that $\gamma$ is an \defo{ancestor} of $\alpha$ if $\gamma = \alpha$ or $\gamma = \parent^t(\alpha)$ for some $t \geq 1$.

\begin{theorem}  \label{thm:kcycle}
   For $n,k \geq 3$, the parent rule $\parent(\alpha)$ for $\A_k(n)$ induces a cycle-joining tree $\cycletree_k(n)$ with nodes $\A_k(n)$ rooted at $\ktreeroot{n}$. 
\end{theorem}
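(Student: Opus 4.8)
The plan is to verify three things: that $\parent$ is a well-defined single-valued function on every non-root node of $\A_k(n)$, that each resulting edge $\{\alpha,\parent(\alpha)\}$ is a legitimate cycle-join witnessed by a conjugate pair, and that the parent pointers are acyclic. Since $\A_k(n)$ is finite, $\parent$ is defined exactly on $\A_k(n)\setminus\{\ktreeroot{n}\}$, so every node other than $\ktreeroot{n}$ has out-degree one and $\ktreeroot{n}$ has out-degree zero; an acyclic functional graph of this shape is necessarily a single tree whose unique sink is $\ktreeroot{n}$, which is the asserted $\cycletree_k(n)$.

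Well-definedness is essentially assembled from the preceding lemmas. First I would note that each list entry is defined on every $\alpha\in\A_k(n)\setminus\{\ktreeroot{n}\}$: such an $\alpha$ is neither $0^n$ nor $(k{-}1)^n$ (both symmetric) and, as remarked before the parent rule, contains at least two non-$(k{-}1)$ symbols, so $\lastNonMax$, $\lastSymbol$, $\firstNonMin$, and $\secondLastNonMax$ all make sense. That at least one entry lands in $\A_k(n)$ is precisely Lemma~\ref{lem:well-defined}: if $\lastNonMax(\alpha)$ and $\lastSymbol(\alpha)$ both fail, then $\firstNonMin(\alpha)\in\A_k(n)$ for $k\ge 4$, while for $k=3$ the string has the form $0\gamma012$ with $\gamma$ a palindrome and $\secondLastNonMax(\alpha)\in\A_3(n)$. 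Being the \emph{first} qualifying member of a fixed list, $\parent(\alpha)$ is then unique.

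For the edges, the key observation is that each of the four functions alters $\alpha$ in exactly one cyclic position: $\lastNonMax$ and $\secondLastNonMax$ increment a single symbol, $\firstNonMin$ decrements one, and $\lastSymbol$ changes the last symbol (passing to the necklace representative is only a rotation). Rotating $\alpha$ and $\parent(\alpha)$ so that the edited position comes first exhibits a conjugate pair $(\tt{x}\beta,\tt{y}\beta)$ with $\tt{x}\beta\in[\alpha]$ and $\tt{y}\beta\in[\parent(\alpha)]$; the two strings differ because $\parent(\alpha)$ is a necklace distinct from $\alpha$, so $[\alpha]\ne[\parent(\alpha)]$ and the classes are disjoint. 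By the Important Property both classes lie in $\Set_k(n)$, so Theorem~\ref{thm:concat} certifies that the edge joins the two corresponding cycles; hence the parent rule produces genuine cycle-joining edges.

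The remaining and hardest step is acyclicity, which I would establish via a rank function that is extremal at $\ktreeroot{n}$ and strictly monotone along parent edges. I would take as primary component the number of leading zeros: an asymmetric bracelet has at most $n{-}2$ of them, $\ktreeroot{n}=0^{n-2}(k{-}2)(k{-}1)$ attains this bound, and a case check shows that for every non-root $\alpha$ the parent has at least as many leading zeros as $\alpha$ (strictly more for $\lastSymbol$, and for $\firstNonMin$ when it empties the leading nonzero; unchanged for $\lastNonMax$, $\secondLastNonMax$, and value-preserving $\firstNonMin$, using that the edited index is never inside the leading zero block). As secondary I would use the length of the trailing $(k{-}1)$-run, which is weakly monotone on the leading-zero-preserving edges. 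The main obstacle is the final tie-break: the operations split into a value-increasing family ($\lastNonMax$, $\secondLastNonMax$) and a value-decreasing family ($\firstNonMin$, $\lastSymbol$), so no additive measure of the base-$k$ value can be monotone across all four. The resolution must lean on the \emph{priority} of the list — a later entry is chosen only when every earlier one fails — together with the exact failure configurations pinned down in Lemmas~\ref{lem:prop2} and~\ref{lem:well-defined} (for instance, $\firstNonMin$ is selected only when $\tt{a}_n=k{-}1$, $\tt{a}_j=k{-}2$, the prefix through $\ell$ is a palindrome, and $\tt{a}_i\ge k{-}2$). These constraints are exactly what block the tempting short cycles; e.g.\ on the string $0^{i-1}(k{-}2)(k{-}1)^{n-i}$ one checks that $\lastSymbol$ already qualifies and is chosen ahead of $\firstNonMin$, so the would-be $\firstNonMin$/$\lastNonMax$ loop never forms and the chain instead continues toward $\ktreeroot{n}$. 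Carrying out the strict-decrease verification as a case analysis on which list entry realizes $\parent(\alpha)$, and packaging the value-based tie-break so that it simultaneously handles both families, is where the real work lies.
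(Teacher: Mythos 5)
Your overall skeleton (well-definedness from Lemma~\ref{lem:well-defined}, conjugate-pair edges, then acyclicity of the parent map on a finite node set, which indeed forces a single tree sinking at $\ktreeroot{n}$) is the right frame, and the first two parts are handled correctly and match the paper's setup. But the third part --- acyclicity --- is the actual content of the theorem, and your proposal does not prove it: you end by conceding that packaging a tie-break that ``simultaneously handles both families'' is ``where the real work lies.'' That is a genuine gap, not a deferred detail. Your own analysis shows why the proposed rank function cannot close it as stated: once the leading-zero count and the trailing $(k{-}1)$-run are exhausted, the remaining tied edges mix the incrementing rules ($\lastNonMax$, $\secondLastNonMax$) with the decrementing rule $\firstNonMin$ (with $\tt{a}_i>1$), so no value-based potential is monotone across them, and gesturing at ``the priority of the list'' does not produce one.

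The paper closes exactly this hole with a sharper structural observation that your sketch is missing, using lexicographic order itself as the potential. Fix a non-root $\alpha$ with first non-zero at index $i$ and last non-$(k{-}1)$ at index $j$. If $i=n-1$, repeated applications of $\lastNonMax$ climb directly to $\ktreeroot{n}$. If $i=j$, so $\alpha = 0^{i-1}\tt{a}_i(k{-}1)^{n-i}$, then iterating the parent rule reaches $0^{i}(k{-}2)(k{-}1)^{n-i-1}$, which is lexicographically smaller than $\alpha$. Otherwise ($i<j$, $i\le n-2$), the incrementing rules $\lastNonMax$ and $\secondLastNonMax$ never alter the first $i$ symbols --- for $\secondLastNonMax$ this uses Lemma~\ref{lem:well-defined}, since that rule fires only when $k=3$ and the string has suffix $012$ --- and since symbols can only be incremented up to $k{-}1$, only finitely many such steps are possible before either the $i=j$ configuration is reached or one of $\firstNonMin$/$\lastSymbol$ fires; each of these yields an ancestor lexicographically smaller than $\alpha$ ($\firstNonMin$ decrements $\tt{a}_i$, while $\lastSymbol$ produces a string with prefix $0^i$). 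Hence every non-root node has either the root or a strictly smaller node of $\A_k(n)$ among its ancestors, and strong induction on lexicographic order finishes the argument. The key idea your proposal lacks is precisely this: the incrementing family fixes the prefix $0^{i-1}\tt{a}_i$ (outside one explicitly handled configuration), so boundedness of increments forces an eventual strict lexicographic decrease, reconciling the two families that defeated your additive measures.
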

\begin{proof}

Let $\alpha = \tt{a}_1\tt{a}_2\cdots \tt{a}_n \in \A(n) \setminus \{\ktreeroot{n}\}$, where $i$ is the index of the first non-zero, $j$ is the index of the last non-$(k{-}1)$, and $\ell$ is the index of the second last non-$(k{-}1)$ symbol in $\alpha$.
We demonstrate that $\ktreeroot{}$ is an ancestor of $\alpha$.
If $i=n-1$ then $\parent^{(k-1-\tt{a}_n) + (k-2-\tt{a}_{n-1})}(\alpha) = \ktreeroot{n} = 0^{n-2}(k{-}2)(k{-}1)$, where each application of $\parent$ uses $\lastNonMax$.
For $i \leq n-2$, we demonstrate that $\alpha$ has an ancestor $\beta$ such that $\beta < \alpha$, and thus must have an ancestor where $i=n-1$.  If $i=j$, then  $\alpha = 0^{i-1}\tt{a}_i(k{-}1)^{n-i}$ and $\parent^{k-1-\tt{a}_i}(\alpha) = 0^{i}(k{-}2)(k{-}2)^{n-i-1} < \alpha$.  Otherwise $i<j$.  If $k=3$, then from Lemma~\ref{lem:well-defined}, $\parent(\alpha) = \secondLastNonMax(\alpha)$ occurs only when $\alpha$ has suffix 012.  Thus, repeated applications of only $\lastNonMax$ and $\secondLastNonMax$ to $\alpha$  will never change any of the first $i$ symbols.  Since both operations only increment symbols to at most $(k{-}1)$, one of $\lastSymbol$ or $\firstNonMin$ must eventually be applied by repeated application of $\parent$ starting with $\alpha$.  If $\firstNonMin$ is applied then it will decrement $\tt{a}_i$ leading to an ancestor that is less than $\alpha$.  Similarly, $\lastSymbol$ will change the last symbol from $(k{-}1)$ to $0$ leading to a node with prefix $0^i$ which is also less than $\alpha$.
\end{proof}

By repeatedly joining cycles from $\cycletree_k(n)$ via conjugate pairs, we can construct an $\OS_k(n)$ (universal cycle) of length $L_k(n)$ using exponential time per symbol (to search for the conjugates) and exponential space.  Table~\ref{table:bounds} illustrates the lengths of the $\OS_k(n)$s constructed for small $n,k$. In Section~\ref{sec:bounds}, we present an exact formula for $L_k(n)$.

\begin{theorem}
There exists a universal cycle for $\Set_k(n)$, which is an $\OS_k(n)$, of length $L_k(n)$.
\end{theorem}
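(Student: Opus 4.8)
The plan is to construct the desired universal cycle directly from the cycle-joining tree $\cycletree_k(n)$ established in Theorem~\ref{thm:kcycle}, using the machinery of Theorem~\ref{thm:concat}. First I would recall the setup: the nodes of $\cycletree_k(n)$ are exactly the asymmetric bracelets $\A_k(n) = \{\alpha_1,\ldots,\alpha_m\}$, and to each node $\alpha_i$ we associate the disjoint universal cycle $U_i$ for its necklace class $[\alpha_i]$ (namely, the aperiodic prefix of $\alpha_i$, traversed cyclically). By definition $\Set_k(n) = [\alpha_1]\cup\cdots\cup[\alpha_m]$, and these necklace classes are pairwise disjoint, so the cycles $U_1,\ldots,U_m$ form a disjoint partition of $\Set_k(n)$.

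The core of the argument is an induction on the structure of the tree. I would proceed by repeatedly applying Theorem~\ref{thm:concat} to join cycles along the edges of $\cycletree_k(n)$. Concretely, I would argue that any subtree rooted at a node $\alpha$ can be collapsed into a single universal cycle for the union of the necklace classes of all nodes in that subtree; this is established by induction on subtree size. The base case is a leaf, whose associated cycle is already a universal cycle for its own necklace class. For the inductive step, given a node with a child joined to it via the conjugate pair defining that tree edge, both the (inductively constructed) cycle for the child's subtree and the cycle for the parent's accumulated subtree contain the appropriate member of the conjugate pair as a substring, so Theorem~\ref{thm:concat} lets us concatenate them into a universal cycle for the combined set. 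Collapsing the whole tree from the leaves upward to the root $\ktreeroot{n}$ yields a single universal cycle $U$ for $\Set_k(n)$, whose length is $|\Set_k(n)| = L_k(n)$ by definition.

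It remains to verify that $U$ is an \emph{orientable} sequence, i.e., that the underlying set $\Set_k(n)$ satisfies the closure condition from the definition of an orientable sequence: if $\alpha \in \Set_k(n)$ then $\alpha^R \notin \Set_k(n)$. This is precisely the \textbf{Important property} stated earlier in Section~\ref{sec:symmetric}, which follows because each $\alpha_i$ is an \emph{asymmetric} bracelet, so its reversal lies in a different necklace class that is not among $[\alpha_1],\ldots,[\alpha_m]$. Invoking that property directly gives that $U$ is an $\OS_k(n)$.

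The main obstacle I anticipate is ensuring that the conjugate-pair edge data from Theorem~\ref{thm:kcycle} interfaces cleanly with the hypotheses of Theorem~\ref{thm:concat} at each join: specifically, one must confirm that when a cycle is joined, the relevant conjugate string actually appears as a \emph{suffix} (after an appropriate rotation of the cyclic sequence) so that the concatenation $U_1 U_2$ is well-formed, and that the disjointness hypothesis $\mathbf{S}_1 \cap \mathbf{S}_2 = \emptyset$ is maintained as cycles accumulate. Since any string in a necklace class can be rotated to end with any of its cyclic shifts, and the accumulated subtree classes are disjoint from the newly joined class by construction of the tree, both conditions hold; I would note these as the routine bookkeeping points rather than belabor them. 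The heavy lifting — that the parent rule yields a genuine tree on all of $\A_k(n)$ — has already been discharged in Theorem~\ref{thm:kcycle}, so this final theorem is essentially an assembly of the preceding results.
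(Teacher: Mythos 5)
Your proposal is correct and follows essentially the same route as the paper, which justifies this theorem in a single sentence preceding it: repeatedly join the cycles of $\cycletree_k(n)$ via the conjugate pairs defining its edges (i.e., iterated application of Theorem~\ref{thm:concat}), with orientability coming from the Important Property of $\Set_k(n)$. Your write-up simply makes explicit the induction over subtrees and the rotation/disjointness bookkeeping that the paper leaves implicit.
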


\begin{table}[h] \small
\begin{center} 
\begin{tabular}{c | r r r r r r}
\backslashbox{$n$}{$k$} & 3 & 4 & 5 & 6 & 7 & 8  \\ 
\hline 
 ~ 3 &             3 &           12 &           30 &           60 &          105 &          168  \\ 
 ~ 4 &            12 &           60 &          180 &          420 &          840 &         1512  \\ 
 ~ 5 &            60 &          360 &         1260 &         3360 &         7560 &        15120  \\ 
 ~ 6 &           225 &         1608 &         6750 &        21150 &        54831 &       124320  \\ 
 ~ 7 &           819 &         7308 &        36890 &       135450 &       403389 &      1034264  \\ 
 ~ 8 &          2676 &        30300 &       187980 &       821940 &      2844408 &      8315496  \\ 
 ~ 9 &          8778 &       126516 &       962580 &      5003970 &     20101326 &     66961608  \\ 
 ~10 &         27180 &       511680 &      4836300 &     30097620 &    140902440 &    536135040  \\ 
 ~11 &         84579 &      2074644 &     24328150 &    181141950 &    988016337 &   4293525544  \\ 
 ~12 &        257205 &      8327808 &    121790490 &   1087414170 &   6917824935 &  34352668560  \\ 
 ~13 &        782964 &     33447960 &    609843780 &   6528527460 &  48439152216 & 274864275504  \\ 
 ~14 &       2361177 &    133931952 &   3050119450 &  39175228260 & 339088485771 & 2198957209792  \\ 
 ~15 &       7125423 &    536379792 &  15255860130 & 235079896440 & 2373737520945 & 17592060218208  \\ 
 ~16 &      21419076 &   2146175580 &  76284577980 & 1410507942900 & 16616280850008 & 140736884449896  \\ 
 ~17 &      64402800 &   8587706400 & 381453125040 & 8463244062000 & 116314913988000 & 1125898765992000  \\ 
 ~18 &     193357350 &  34353845664 & 1907295914700 & 50779660926240 & 814205346309138 & 9007193819084160  \\ 
 ~19 &     580569795 & 137428992036 & 9536650390670 & 304679295576630 & 5699444909171768 & 72057583837380680  \\ 
 ~20 &    1742213832 & 549729612720 & 47683422899280 & 1828077103852860 & 39896121850134479 & 576460703985782112  \\ 
\end{tabular}
\end{center}
    \caption{ Lower bounds $L_k(n)$ on the maximal length of an $\OS_k(n)$ for $n \leq 20$  and $k \leq 8$.  }
    \label{table:bounds}
\end{table}

\begin{theorem}
    \label{thm:chain}
    $\cycletree_k(n)$ has the chain property. 

\end{theorem}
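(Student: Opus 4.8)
The plan is to verify the Chain Property directly from its definition: I must show that if a node $\beta \in \A_k(n)$ has two distinct children $\alpha \neq \alpha'$ joined to $\beta$ via conjugate pairs $(\tt{x}\tt{a}_2\cdots \tt{a}_n, \tt{y}\tt{a}_2\cdots \tt{a}_n)$ and $(\tt{x}'\tt{b}_2\cdots \tt{b}_n, \tt{y}'\tt{b}_2\cdots \tt{b}_n)$, then $\tt{a}_2\cdots \tt{a}_n \neq \tt{b}_2\cdots \tt{b}_n$. The key structural observation is that each of the four parent functions $\lastNonMax$, $\lastSymbol$, $\firstNonMin$, $\secondLastNonMax$ modifies $\alpha$ at a single position to produce the parent $\beta$, and the conjugate pair joining $\alpha$ to $\beta$ is determined by a rotation of $\alpha$ that places the modified symbol first. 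So for each child, the string $\tt{a}_2\cdots \tt{a}_n$ (the shared suffix of the conjugate pair) is a specific rotation of $\beta$ (equivalently of $\alpha$ with one symbol restored), and I would characterize exactly which rotation arises from each of the four functions.

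First I would set up notation: for a child $\alpha$ joined to $\beta = \parent(\alpha)$, write down the conjugate pair explicitly for each of the four cases. For $\firstNonMin$, the altered symbol is at position $i$ (first non-zero), so the relevant rotation of $\beta$ begins at position $i$; for $\lastSymbol$ the alteration is effectively at the wraparound (last symbol incremented, then rotated so the new leading $0$ is first); for $\lastNonMax$ the altered symbol is at position $j$ (last non-$(k{-}1)$); and for $\secondLastNonMax$ it is at position $\ell$. In each case the shared suffix $\tt{a}_2\cdots\tt{a}_n$ of the conjugate pair is the length-$(n{-}1)$ string obtained by rotating $\beta$ to start just after the modified position and dropping the (now-known) symbol. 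The plan is to argue that two children of the same $\beta$ yielding the same suffix would force the two rotations of $\beta$ to coincide, which — because $\beta$ determines a fixed necklace class — pins down the rotation index, and hence forces the two modifications to occur at the same cyclic position with the same restored value, making $\alpha = \alpha'$, a contradiction.

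The main obstacle I anticipate is handling the interaction \emph{across different functions}: two children of $\beta$ might arise through two different members of the list $\langle \lastNonMax, \lastSymbol, \firstNonMin, \secondLastNonMax\rangle$, so I cannot simply compare two applications of the same function. I would therefore organize the argument by the cyclic position at which $\beta$ and the child differ. The crucial point is that the child $\alpha$ is recoverable from the conjugate pair: given the shared suffix $\tt{a}_2\cdots\tt{a}_n$ and knowing it came from a valid join, the leading symbol $\tt{x}$ of the conjugate belonging to $\alpha$ together with this suffix rotates back to $\alpha$. If two children shared the same suffix, they would share the same necklace representative $\beta$ \emph{and} the same cyclic cut-point, and the two possible leading symbols $\tt{x}, \tt{x}'$ would have to differ (else the children coincide). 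I would rule this out by using the defining inequalities that make each parent function produce an asymmetric bracelet: the position and direction of each single-symbol change is forced (a decrement for $\firstNonMin$, an increment for the others) by the requirement that $\beta$ be lexicographically smaller or structurally distinguished, so distinct children cannot reuse the same cut of $\beta$ with two genuinely different restored symbols.

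Finally, I would knit these cases together, likely using Lemma~\ref{lem:well-defined} to restrict when $\secondLastNonMax$ can even be the operative function (only for $k=3$ and only when $\alpha$ has suffix $012$), which sharply limits the case analysis and prevents $\secondLastNonMax$-children from colliding with children produced by the other three functions. The expected shape of the proof is an exhaustive but short case check over the at most $\binom{4}{2}$ unordered pairs of functions (further pruned by which functions can coexist as parent-producing rules for the same $\beta$), in each case exhibiting a position in the shared suffix where the two candidate strings must differ, thereby establishing $\tt{a}_2\cdots\tt{a}_n \neq \tt{b}_2\cdots\tt{b}_n$ and completing the verification of the Chain Property.
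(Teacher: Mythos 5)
Your opening reduction is essentially the same as the paper's: if two children of a node $\gamma$ are joined via conjugate pairs sharing the same length-$(n{-}1)$ string $\sigma$, then the two parent-side conjugates are the same string $\tt{x}\sigma$ (two rotations of $\gamma$ that agree in $n{-}1$ positions agree everywhere, since they have the same multiset of symbols), so both children arise by modifying $\gamma$ at the same cyclic cut; and two modifications in the same direction would restore the same symbol and hence give the same child. This is exactly how the paper reduces, without loss of generality, to the mixed situation where one child $\beta$ has $\parent(\beta)=\firstNonMin(\beta)$ (the only decrementing rule) while the other child $\alpha$ has $\parent(\alpha)\in\{\lastNonMax,\lastSymbol,\secondLastNonMax\}$.

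The gap is in how you dispose of this mixed case, which is the entire substance of the theorem. You propose to rule it out ``using the defining inequalities that make each parent function produce an asymmetric bracelet,'' i.e., you claim that two genuinely different restored symbols cannot both yield valid children at the same cut of $\gamma$. That mechanism is false, and the paper itself supplies the counterexample: in the first example of Section~\ref{sec:parentrule} (for a parent rule that tries $\firstNonMin$ first), with $k=3$, $n=6$, the asymmetric bracelets $012202$ and $010122$ are joined to $000122$ via the conjugate pairs $(201220,001220)$ and $(101220,001220)$, which share the string $001220$; one join changes the last symbol ($2\mapsto 0$), the other decrements the first non-zero. So membership in $\A_k(n)$ alone does not prevent the collision. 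What prevents it in $\cycletree_k(n)$ is the priority order $\langle\lastNonMax,\lastSymbol,\firstNonMin,\secondLastNonMax\rangle$: if $\parent(\beta)=\firstNonMin(\beta)$, then necessarily $\lastNonMax(\beta)\notin\A_k(n)$ and $\lastSymbol(\beta)\notin\A_k(n)$. Accordingly, the paper's proof works out, in each of the three cases for $\parent(\alpha)$, the explicit form $\beta$ would have to take (namely $\gamma$ with the symbol at the shared cut incremented), and then shows, using Lemmas~\ref{lem:firstnon}, \ref{lem:prop2}, \ref{lem:pal} and~\ref{lem:well-defined}, that this $\beta$ either is not in $\A_k(n)$ at all (in the $\secondLastNonMax$ case $\beta$ would end in $212$ with its first non-zero at index $n{-}2$, hence be symmetric), or admits a higher-priority parent (in the $\lastSymbol$ case one gets $\lastSymbol(\beta)=01\tt{a}_1\cdots\tt{a}_{n-2}\in\A_k(n)$), contradicting $\parent(\beta)=\firstNonMin(\beta)$. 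Any correct proof must invoke this ordering of the list in the parent rule; your outline never does, so the plan as stated cannot be completed.
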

\begin{proof} 
By contradiction. 
Suppose $\cycletree_k(n)$ has a node $\gamma$ with two children $\alpha$ and $\beta$ joined via conjugate pairs $(\tt{x}\sigma, \tt{y}\sigma)$ and 
$(\tt{x}\sigma, \tt{y'}\sigma)$, respectively, for some string $\sigma$.  Note that $\alpha, \beta, \gamma \in \A_k(n)$.
Since $\alpha \neq \beta$, the functions they apply to obtain their parent $\gamma$ cannot both increment a symbol. Without loss of generality, assume $\parent(\beta) = \firstNonMin(\beta)$ and
$\parent(\alpha)$ applies one of $\lastNonMax$, $\lastSymbol$, or $\secondLastNonMax$.  
Let $\alpha = \tt{a}_1\tt{a}_2\cdots \tt{a}_n$ and consider the three possible cases for $\parent(\alpha)$.
\begin{itemize}
 \item    $\parent(\alpha) = \lastNonMax(\alpha)$. 
 Let $j$ denote the index of the last non-$(k{-}1)$ in $\alpha$. Since 
 $\gamma = \tt{a}_1\cdots \tt{a}_{j-1}(\tt{a}_j{+}1)\tt{a}_{j+1}\cdots \tt{a}_n$, $\beta = \tt{a}_1\cdots \tt{a}_{j-1}(\tt{a}_j{+}2)\tt{a}_{j+1}\cdots \tt{a}_n$ where $\tt{a}_j{+}2 \leq (k{-}1)$.  However, from Lemma~\ref{lem:firstnon}, either $\parent(\beta) = \lastNonMax(\beta)$, or $\beta$ is not in $\A_k(n)$.  Contradiction.

\item $\parent(\alpha) =\lastSymbol(\alpha)$. 
As noted in the proof of Lemma~\ref{lem:well-defined}, $\gamma = 0\tt{a}_1\cdots \tt{a}_{n-1}$.
Therefore $\beta = 1\tt{a}_1\cdots \tt{a}_{n-1}$.  Since $\beta \in \A_k(n)$, $1\tt{a}_1\cdots \tt{a}_{n-2} < (1\tt{a}_1\cdots \tt{a}_{n-2})^R$; otherwise, $\beta$ is not a bracelet or it is symmetric (see Lemma~\ref{lem:pal}).
Since $\parent(\beta) \neq \lastNonMax(\beta)$, from Lemma~\ref{lem:prop2}, $\tt{a}_{n-1} = k{-}1$. However, this means that 
$\lastSymbol(\beta) = 01\tt{a}_1\cdots \tt{a}_{n-2} \in \A_k(n)$, which contradicts that $\parent(\beta) = \firstNonMin(\beta)$.

\item $\parent(\alpha) =\secondLastNonMax(\alpha)$. From Lemma~\ref{lem:well-defined}, $\alpha$ has suffix 012 and thus $\gamma = \tt{a}_1\cdots \tt{a}_{n-3}112$. Therefore $\beta = \tt{a}_1\cdots \tt{a}_{n-3}212$ with its first non-0 at index $n-2$, which means $\beta$ is symmetric.  Contradiction.
 \end{itemize}
\end{proof}

\subsection{An $O(n)$-time successor rule} \label{sec:succ}

In this section, we apply the generic successor rule $g(\alpha)$ defined in Section~\ref{sec:successor} to the cycle-joining tree $\cycletree_k(n)$ that admits the Chain Property.  In particular, we determine whether or not $\alpha$ belongs to a conjugate pair, and if so, how to efficiently compute the function $\Call{Next}{\alpha}$.

Given $\alpha  = \tt{a}_1\tt{a}_2\cdots \tt{a}_n \in \Set_k(n)$, 
let $i$ be the largest index such that $\tt{a}_i > 0$, 
let $j$ be the smallest index greater than $1$ such that $\tt{a}_j < k{-}1$,  and let $\ell$ be the second smallest index greater than $1$ such that $\tt{a}_{\ell} < k{-}1$.  Recall that $\tilde \alpha$ denotes the necklace of $[\alpha]$.
If $\alpha$ belongs to some conjugate pair (possibly more than one) used to create $\cycletree_k(n)$, then we consider the possibilities for $\alpha$ depending on whether or not it belongs to a parent or child node joined by a given conjugate pair.  
If $\alpha$ belongs to the child, let $\beta = \tilde \alpha$.  If $\parent(\beta) = \lastNonMax(\beta)$, then $\beta = \tt{a}_j\tt{a}_{j+1}\cdots \tt{a}_n \blue{\tt{a}_1} (k{-}1)^{j-2}$. In other words, $\tt{a}_1$ corresponds to the last non-($k{-}1$) symbol in $\beta$.  If $\alpha$ belongs to the parent, let $\gamma$ denote the child node.  If $\parent(\gamma) = \lastNonMax(\gamma) = \tilde \alpha$, then it must be that $\gamma = \tt{a}_j\tt{a}_{j+1}\cdots \tt{a}_n \blue{(\tt{a}_1{-}1)} (k{-}1)^{j-2}$.  A similar analysis holds for the other three cases of the parent rule $\parent(\alpha)$ giving rise to the definitions of the following eight strings:
\begin{center}
    \begin{tabular}{lll}
    $\beta_1 =  \tt{a}_j\tt{a}_{j+1}\cdots \tt{a}_n \blue{\tt{a}_1} (k{-}1)^{j-2}$ &  ~~~ $\gamma_1 =  \tt{a}_j\tt{a}_{j+1}\cdots \tt{a}_n \blue{(\tt{a}_1{-}1)} (k{-}1)^{j-2}$ & ~~~~~~($\lastNonMax$) \\
    $\beta_2 = \tt{a}_2\tt{a}_3\cdots \tt{a}_n\red{\tt{a}_1}$ & ~~~ $\gamma_2 = \tt{a}_2\tt{a}_3\cdots \tt{a}_n\red{(\tt{a}_1{-}1)}$ & ~~~~~~($\lastSymbol$)\\ 

$\beta_3 = 0^{n-i}\bblack{\tt{a}_1}\tt{a}_2\cdots \tt{a}_i$ & ~~~ $\gamma_3 = 0^{n-i}\bblack{(\tt{a}_1{+}1)}\tt{a}_2\cdots \tt{a}_i$ & ~~~~~~($\firstNonMin$) \\ 

$\beta_4 =  \tt{a}_{\ell}\tt{a}_{{\ell}+1}\cdots \tt{a}_n \green{\tt{a}_1} \tt{a_2}\cdots \tt{a}_{{\ell}-1}$ & ~~~ $\gamma_4 =  \tt{a}_{\ell}\tt{a}_{{\ell}+1}\cdots \tt{a}_n \green{(\tt{a}_1{-}1)} \tt{a_2}\cdots \tt{a}_{{\ell}-1}$  &  ~~~~~~($\secondLastNonMax$).
\end{tabular}
\end{center}
Assume addition on the symbols is modulo $k$; i.e., $(k{-}1)+1=0$ and $0-1 = (k{-}1)$.  The above strings can be tested to determine whether or not $\alpha$ belongs to a conjugate pair. For instance if $\gamma_1 \in \A_k(n)$ and $\parent(\gamma_1) = \lastNonMax(\gamma_1)$, then $\alpha$ belongs to the conjugate pair $(\alpha, \blue{(\tt{a}_1{-}1)}\tt{a}_2\cdots \tt{a}_n)$.  If $\alpha$ is found to belong to some conjugate pair, then the second issue is to efficiently compute the function $\Call{Next}{\alpha}$.  If $\alpha$ belongs to some parent node of a conjugate pair, then $\Call{Next}{\alpha}$ is simply the incremented or decremented value of $\tt{a}_1$ defined by the parent rule for the corresponding $\gamma_i$.  If $\alpha$ does not belong to a parent in any conjugate pair, then it belongs to the last node in its corresponding chain which contains some $\beta_i$; to compute $\Call{Next}{\alpha}$ we must determine the first node in the chain. Na\"{i}vely, we can repeatedly check the ancestors $\alpha$ until we reach the top of the chain.  In the worst case this will take $O(kn)$ time.  With a deeper analysis of the four cases, we can remove the factor $k$. 
Let $\sigma = \tt{a}_j\tt{a}_{j+1}\cdots \tt{a}_n \blue{(k{-}2)} (k{-}1)^{j-2}$.
\begin{itemize}
    \item Suppose $\beta_1 \in \A_k(n)$ and $\parent(\beta_1) = \lastNonMax(\beta_1)$. Then Lemma~\ref{lem:prop2} implies that $\sigma \in \A_k(n)$ and $\Call{Next}{\alpha}$ is either $(k{-}2)$ or $(k{-}1)$, depending on whether or not $\parent(\sigma) = \lastNonMax(\sigma)$. 
    \item Suppose $\beta_2  \in \A_k(n)$ and $\parent(\beta_2) = \lastSymbol(\beta_2) = \delta$.
    Since $\lastNonMax(\beta_2) \notin \A_k(n)$, from Lemma~\ref{lem:prop2}, $a_1 = k{-}1$.  Furthermore, $\delta = 0\tt{a_2}\cdots\tt{a}_n$ since from Lemma~\ref{lem:prop2}, $\tt{a}_n$ is either $(k{-}2)$ or $(k{-}1)$.  It is easy to see that $\delta$ (if not the root) is not joined to $\parent(\delta)$ via a conjugate pair containing $0\tt{a_2}\cdots\tt{a}_n$. 
    Thus, $\Call{Next}{\alpha} = 0$.
    \item Suppose $\beta_3  \in \A_k(n)$ and $\parent(\beta_3) = \firstNonMin(\beta_3) = \delta$. 
Then Lemma~\ref{lem:firstNonParent} implies that
        $\parent(\delta) = \lastNonMax(\delta)$ or $\parent(\delta) = \lastSymbol(\delta)$.  Thus, $\delta$ is not joined to $\parent(\delta)$ via a conjugate pair containing the string
        $(\tt{a}_1{-}1)\tt{a}_2\cdots \tt{a}_n$ because it implies $\parent(\delta) = \tilde \alpha$.  Thus $\Call{Next}{\alpha} = \tt{a}_1 - 1$. 
        
    \item Suppose $\beta_4  \in \A_k(n)$ and $\parent(\beta_4) = \secondLastNonMax(\beta_4) = \delta$.  Then $k=3$ and by Lemma~\ref{lem:well-defined}, $\beta_4$ starts with 0 and has suffix $012$. This implies $\delta$ starts with 0 and has suffix 112.  By Lemma~\ref{lem:prop2}, $\lastNonMax(\delta) \in \A_3(n)$ and hence  $\parent(\delta) =  \lastNonMax(\delta)$.  Thus, $\Call{Next}{\alpha} = \tt{a}_1+1$.
\end{itemize}

\noindent
The above analysis gives rise to the following successor rule $h(\alpha) = g(\alpha)$ based on $\cycletree_k(n)$.

\begin{result}
\noindent
{\bf Successor-rule based on $\cycletree_k(n)$ to construct an $\OS_k(n)$ of length $L_k(n)$}  ~~Apply the conditions top down:

\medskip  \small

$h(\alpha) = \left\{ \begin{array}{ll}

         \tt{a}_1-1 &\ \  \mbox{if $\gamma_1 \in \A_k(n)$ and $\parent(\gamma_1) = \lastNonMax(\gamma_1)$;}\\
         \tt{a}_1-1 &\ \  \mbox{if $\gamma_2 \in \A_k(n)$ and $\parent(\gamma_2) = \lastSymbol(\gamma_2)$;}\\
         \tt{a}_1+1 & \ \ \mbox{if $\gamma_3 \in \A_k(n)$ and $\parent(\gamma_3) = \firstNonMin(\gamma_3)$;}\\
         \tt{a}_1-1 &\ \  \mbox{if $\gamma_4 \in \A_k(n)$ and $\parent(\gamma_4) = \secondLastNonMax(\gamma_4)$;}\\
         
& \\          
        k{-}1 &\ \  \mbox{if $\beta_1 \in \A_k(n)$ and $\parent(\beta_1) = \lastNonMax(\beta_1)$ and $\parent(\sigma) = \lastNonMax(\sigma)$;}\\
        k{-}2 &\ \  \mbox{if $\beta_1 \in \A_k(n)$ and $\parent(\beta_1) = \lastNonMax(\beta_1)$;}\\
        
         0 &\ \  \mbox{if $\beta_2 \in \A_k(n)$ and $\parent(\beta_2) = \lastSymbol(\beta_2)$;}\\

         \tt{a}_1-1 & \ \ \mbox{if $\beta_3 \in \A_k(n)$ and $\parent(\beta_3) = \firstNonMin(\beta_3)$;}\\

         \tt{a}_1+1 &\ \  \mbox{if $\beta_4 \in \A_k(n)$ and $\parent(\beta_4) = \secondLastNonMax(\beta_4)$;  }\\
 & \\
         {\tt{a}_1} \  &\ \  \mbox{otherwise.}\end{array} \right.$

\end{result}

\begin{theorem}
    For $n,k \geq 3$, $h(\alpha)$ is a successor rule for an $\OS_k(n)$ of length $L_k(n)$ that runs in $O(n)$ time and uses $O(n)$ space. 
\end{theorem}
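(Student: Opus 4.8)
The plan is to split the claim into three independent parts: correctness (that $h$ generates an $\OS_k(n)$ of length $L_k(n)$), the $O(n)$ time bound, and the $O(n)$ space bound. For correctness I would first invoke Theorem~\ref{thm:chain}, which guarantees that $\cycletree_k(n)$ has the Chain Property; hence the generic rule $g$ of Section~\ref{sec:successor} (from~\cite{karyframework}) is a successor rule for a universal cycle of $\Set_k(n)$. By the Important Property of Section~\ref{sec:symmetric}, $\alpha \in \Set_k(n)$ implies $\alpha^R \notin \Set_k(n)$, so this universal cycle is an $\OS_k(n)$; and since it visits each string of $\Set_k(n)$ exactly once, its length is $|\Set_k(n)| = L_k(n)$. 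It therefore suffices to prove that $h = g$.

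To show $h = g$ I would argue, writing $\beta = \tt{a}_2\cdots \tt{a}_n$, that $\alpha$ belongs to a conjugate pair of $\cycletree_k(n)$ precisely when $\alpha$ is the parent-side or the child-side string of some edge on the chain determined by $\tt{a}_1\beta$, and that the eight test strings enumerate all such cases. The strings $\gamma_1,\dots,\gamma_4$ realize the four possible child nodes when $\alpha$ is the \emph{parent-side} string of a downward edge produced by $\lastNonMax$, $\lastSymbol$, $\firstNonMin$, $\secondLastNonMax$ respectively; in each such case moving one step down the chain changes only the leading symbol, so $\Call{Next}{\alpha} = \tt{a}_1 \pm 1$ exactly as the first four clauses prescribe. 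The strings $\beta_1,\dots,\beta_4$ realize $\tilde\alpha$ when $\alpha$ is a \emph{child-side} string. A middle-of-chain string is simultaneously a parent-side string and is thus already handled by a $\gamma_i$ clause, so the $\beta_i$ clauses are reached only when $\alpha$ sits at the bottom of its chain, where $\Call{Next}{\alpha}$ must wrap to the top. The four bullet points preceding the theorem locate this wrap-around target by $O(1)$ lookups: for $\beta_1$, testing $\sigma$ against $\lastNonMax$ via Lemma~\ref{lem:prop2} yields $k{-}2$ or $k{-}1$; for $\beta_2$, Lemma~\ref{lem:prop2} forces $\tt{a}_1 = k{-}1$ and identifies the $\lastSymbol$-parent $0\tt{a}_2\cdots \tt{a}_n$ as the top of the chain, giving leading symbol $0$; for $\beta_3$, Lemma~\ref{lem:firstNonParent} shows the $\firstNonMin$-image is itself the top of its chain, so $\Call{Next}{\alpha} = \tt{a}_1{-}1$; and for $\beta_4$ (the $k=3$ case) Lemma~\ref{lem:well-defined} together with Lemma~\ref{lem:prop2} gives $\Call{Next}{\alpha} = \tt{a}_1{+}1$. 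Evaluating the clauses top-down uses the parent-side answer whenever both roles are present, and if no clause fires then $\alpha$ lies on no chain and $h(\alpha) = \tt{a}_1 = g(\alpha)$; the Chain Property is exactly what ensures $\alpha$ lies on at most one chain, so no two clauses can return conflicting values.

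For the complexity bounds, each of $\beta_1,\dots,\beta_4$, $\gamma_1,\dots,\gamma_4$ and $\sigma$ is a length-$n$ string built in $O(n)$ time from $\alpha$ and the indices $i,j,\ell$ (all found in a single scan). Every guard in $h$ has the form ``$\mu \in \A_k(n)$ together with a condition $\parent(\mu) = \lastNonMax(\mu)$ (or $\lastSymbol$, $\firstNonMin$, $\secondLastNonMax$).'' Deciding $\mu \in \A_k(n)$ is $O(n)$: compute the least rotation of $\mu$ and of $\mu^R$ with a linear-time least-rotation algorithm, then check that $\mu$ equals the former and is strictly smaller than the latter. Deciding which function equals $\parent(\mu)$ requires at most membership tests of the four candidate images $\lastNonMax(\mu),\lastSymbol(\mu),\firstNonMin(\mu),\secondLastNonMax(\mu)$, again $O(n)$. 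Since $h$ has a constant number of clauses, each evaluated in $O(n)$ time using only $O(1)$ auxiliary strings of length $n$, the rule runs in $O(n)$ time per symbol and $O(n)$ space.

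I expect the second paragraph to be the main obstacle: establishing $h = g$ demands showing that the eight-string case analysis is exhaustive and that the top-down precedence correctly disambiguates the overlap between the parent-side and child-side roles of a single string, all resting delicately on the Chain Property and on the precise wrap-around computations justified by Lemmas~\ref{lem:prop2}, \ref{lem:firstNonParent}, and~\ref{lem:well-defined}. By contrast, the complexity bounds are routine once a linear-time least-rotation subroutine is invoked.
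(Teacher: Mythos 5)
Your proposal is correct and takes essentially the same route as the paper: the paper likewise reduces correctness to the Chain Property (Theorem~\ref{thm:chain}) plus the generic successor rule $g$ from~\cite{karyframework} and the Important Property, establishes $h=g$ via exactly the eight-string case analysis (with Lemmas~\ref{lem:prop2}, \ref{lem:firstNonParent}, and~\ref{lem:well-defined} handling the wrap-around at the bottom of each chain), and gets the $O(n)$ time and space bounds from linear-time symmetric/asymmetric bracelet testing. The only difference is packaging: what you flag as the main obstacle (the $h=g$ verification) is carried out in the paper as the analysis immediately preceding the theorem, so its proof body simply invokes that ``previous analysis.''
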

\begin{proof}
Let $\alpha = \tt{a}_1\tt{a}_2 \cdots \tt{a}_n \in \Set_k(n)$.  Our previous analysis demonstrates that $h(\alpha) = g(\alpha)$ for the cycle joining tree $\cycletree_k(n)$.  Determining whether or not a string is a symmetric/asymmetric necklace or bracelet can be computed in $O(n)$ time and $O(n)$ space~\cite{booth,G&S-Orientable:2024}. Thus, all of the membership tester and functions required in the definition of $h(\alpha)$ can be computed in $O(n)$ time using $O(n)$ space.  
\end{proof}
In the next section, we demonstrate that length of the $\OS_k(n)$ generated by our successor rule $h(\alpha)$ is asymptotically optimal.

\section{Bounds on the maximal length of an $\OS_k(n)$}  \label{sec:bounds}
Recall that $L_k(n) = |\Set_k(n)|$ and $M_k(n)$ is the maximal length of an $\OS_k(n)$.  Our construction in Section~\ref{sec:parent} demonstrates that $L_k(n) \leq M_k(n)$ or $k \geq 3$.
Dai et al.~\cite{Dai} provide the following lower bound $L_2(n)$ for $M_2(n)$, where $\mu$ is the M\"{o}bius function:
\[  L_2(n) = \frac{1}{2}\left( 2^{n} -  \sum_{d  \mid  n} \mu(n/d) \frac{n}{d} H_2(d) \right), \ \ \ \text{ where } \ \ \ H_2(d) = \frac{1}{2} \sum\limits_{i  \mid  d} i \left( 2^{\lfloor \frac{i+1}{2} \rfloor} + 2^{\lfloor \frac{i}{2} \rfloor +1} \right).\]
Applying the same techniques, this formula can be  generalized to $L_k(n)$, 
\[ L_k(n) = \frac{1}{2}\left( k^{n} -  \sum_{d  \mid  n} \mu(n/d) \frac{n}{d} H_k(d) \right), \ \ \ \text{ where } \ \ \ H_k(d) = \frac{1}{2} \sum\limits_{i  \mid  d} i \left( k^{\lfloor \frac{i+1}{2} \rfloor} + k^{\lfloor \frac{i}{2} \rfloor +1} \right).\]
Exact values of $L_k(n)$ for some small $n,k$ are given 
in Table~\ref{table:bounds}.

\begin{theorem}
For $n \geq 3$ and $k \geq 3$, 
$ \displaystyle{ \lim_{n\to \infty} \frac{M_k(n) - L_k(n)}{L_k(n)} =0. }$ 
\end{theorem}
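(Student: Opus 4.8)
The plan is to sandwich $M_k(n)$ between the lower bound $L_k(n)$ supplied by our construction and the trivial upper bound $U_k(n) = \frac{1}{2}(k^n - k^{\lfloor(n+1)/2\rfloor})$ recorded in Section~\ref{sec:intro}. Since $L_k(n) \le M_k(n) \le U_k(n)$ and $M_k(n)-L_k(n)\ge 0$, the ratio $(M_k(n)-L_k(n))/L_k(n)$ is nonnegative and bounded above by $(U_k(n)-L_k(n))/L_k(n)$, so by the squeeze theorem it will suffice to show that this upper ratio tends to $0$. Substituting the closed forms, the difference is
\[ U_k(n) - L_k(n) = \frac{1}{2}\left( \sum_{d \mid n} \mu(n/d)\,\frac{n}{d}\, H_k(d) \; - \; k^{\lfloor (n+1)/2 \rfloor} \right), \]
so everything reduces to estimating the size of the M\"obius sum $S(n) := \sum_{d\mid n}\mu(n/d)\frac{n}{d}H_k(d)$ relative to $k^n$.

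The key step will be to prove that $S(n)$ is exponentially smaller than $k^n$; concretely I would establish $|S(n)| = O(\mathrm{poly}(n)\cdot k^{n/2+1})$. First I would record the elementary inequality $k^{\lfloor(i+1)/2\rfloor} \le k^{\lfloor i/2\rfloor+1}$, which gives $H_k(d) \le \sum_{i\mid d} i\,k^{\lfloor i/2\rfloor+1} = O(d^2 k^{\lfloor d/2\rfloor+1})$, since the $i=d$ term dominates and there are at most $d$ divisors. The term $d=n$ of $S(n)$ then contributes $H_k(n) = O(n^2 k^{\lfloor n/2\rfloor+1})$. Every proper divisor satisfies $d \le n/2$, hence $\lfloor d/2\rfloor \le \lfloor n/4\rfloor$, so each contributes $\frac{n}{d}H_k(d) = O(n^3 k^{\lfloor n/4\rfloor+1})$; summing over the at most $n$ proper divisors bounds their total by $O(n^4 k^{\lfloor n/4\rfloor+1})$, which is exponentially dominated by the $d=n$ term. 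This yields $|S(n)| = O(n^2 k^{n/2+1})$, and in particular $S(n) = o(k^n)$.

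With this estimate the conclusion follows immediately. From $L_k(n) = \frac{1}{2}(k^n - S(n))$ and $S(n) = o(k^n)$ I get $L_k(n) = \frac{1}{2}k^n(1-o(1))$, so $L_k(n) \ge \frac{1}{4}k^n$ for all sufficiently large $n$. Combining the pieces,
\[ 0 \le \frac{M_k(n)-L_k(n)}{L_k(n)} \le \frac{U_k(n)-L_k(n)}{L_k(n)} \le \frac{\tfrac{1}{2}|S(n)|}{\tfrac{1}{4} k^n} = O\!\left(n^2 k^{-n/2+1}\right), \]
which tends to $0$ as $n\to\infty$, and the squeeze theorem delivers the stated limit.

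I expect the main obstacle to be the bookkeeping in the divisor sum $S(n)$: one must confirm that the single term $d=n$ genuinely dominates and that the accumulated contribution of the (possibly many) proper divisors stays exponentially below $k^{n/2}$ despite the growing factor $n/d$. Everything else is the routine squeeze argument and elementary exponent comparisons. Note that replacing $U_k(n)$ by the sharper bound of~\cite{Alhakim&etal:2023} would only improve the constants and is not needed for the asymptotic claim.
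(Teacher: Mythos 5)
Your proposal is correct and follows essentially the same route as the paper's proof: both sandwich $M_k(n)$ between $L_k(n)$ and the trivial upper bound $\frac{1}{2}\bigl(k^n - k^{\lfloor (n+1)/2\rfloor}\bigr)$, bound $H_k(d) \le d^2 k^{\lfloor d/2\rfloor+1}$, and show the M\"obius sum is at most a polynomial in $n$ times $k^{\lfloor n/2\rfloor+1}$, so the ratio is $O(\mathrm{poly}(n)\,k^{-n/2+1})\to 0$. The only difference is bookkeeping: you split off the $d=n$ term from the proper divisors to get the slightly sharper $O(n^2 k^{n/2+1})$, whereas the paper crudely sums over all $1\le d\le n$ to get $n^3 k^{\lfloor n/2\rfloor+1}$; both suffice.
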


\begin{proof}    
Since $\mu(n/d) \leq 1$ and
$\displaystyle{ H_k(d)   \ = \    \frac{1}{2} \sum\limits_{i  \mid  d} i \left( k^{\lfloor \frac{i+1}{2} \rfloor} + k^{\lfloor \frac{i}{2} \rfloor +1} \right) \  \leq \    \sum\limits_{i=1}^d i  k^{\lfloor \frac{i}{2} \rfloor +1}  \  \leq \     d^2k^{\lfloor \frac{d}{2} \rfloor +1},}$
we have
\[ \sum_{d  \mid  n} \mu(n/d) \frac{n}{d} H_k(d)  \  \leq  \   \sum_{d=1}^n \frac{n}{d} d^2k^{\lfloor \frac{d}{2} \rfloor +1}  \ \leq \    n^3k^{\lfloor \frac{n}{2} \rfloor +1}.\]
Thus, 
$\displaystyle{ L_k(n) \ = \  \frac{1}{2}\left( k^{n} -  \sum_{d  \mid  n} \mu(n/d) \frac{n}{d} H_k(d) \right) \ \geq \  \frac{1}{2}\left( k^{n} -  n^3k^{\lfloor \frac{n}{2} \rfloor +1} \right). }$
Recalling from Section~\ref{sec:intro} that $M_k(n) \leq \frac{1}{2}(k^n - k^{\lfloor (n+1)/2\rfloor})$, we have 
\begin{align*}
\frac{M_k(n) - L_k(n)}{L_k(n)} & \  \leq \   
\frac{\frac{1}{2}(k^n - k^{\lfloor (n+1)/2\rfloor}) - \frac{1}{2}\left( k^{n} -  n^3k^{\lfloor \frac{n}{2} \rfloor +1} \right)}{\frac{1}{2}\left( k^{n} -  n^3k^{\lfloor \frac{n}{2} \rfloor +1} \right)}\\
& \  = \ 
\frac{  n^3k^{\lfloor \frac{n}{2} \rfloor +1}- k^{\lfloor (n+1)/2\rfloor}}{ k^{n} - n^3k^{\lfloor \frac{n}{2} \rfloor +1} }.
\end{align*}
The result follows.
\end{proof}

\bigskip

\begin{corollary}
    The $\OS_k(n)$ generated by the successor rule $h(\alpha$) has asymptotically optimal length.
\end{corollary}

\bibliographystyle{acm.bst}

\bibliography{abbrevs,refs}

\end{document}